%
%
\documentclass[aps,prb,floatfix,longbibliography,twocolumn,footinbib,superscriptaddress]{revtex4-2}  
\expandafter\let\csname equation*\endcsname\relax
\expandafter\let\csname endequation*\endcsname\relax

\usepackage{graphicx}  
\usepackage{mathtools}
\usepackage{amsfonts, amssymb}   
\usepackage{bm}        
\usepackage{amsmath}    
\usepackage{amsthm}    
\usepackage{verbatim}   
\usepackage{xcolor}      
\usepackage[caption=false]{subfig}  
\usepackage{natbib}
    \allowdisplaybreaks
    \bibliographystyle{apsrev4-2}
\usepackage[colorlinks=true, citecolor=magenta, linkcolor=blue, filecolor=magenta, urlcolor=blue]{hyperref}   
\usepackage{float}
\usepackage{physics}
\usepackage{enumerate}
\usepackage{mathrsfs}
\usepackage{relsize}
\usepackage{wasysym}
\usepackage{tikz}
\usetikzlibrary{graphs, arrows,chains,matrix,positioning,scopes,shadows, calc}
\usepackage{scalerel,stackengine}
\usepackage{accents}
\usepackage{orcidlink}
\usepackage{soul}
\usepackage[normalem]{ulem}

\hyphenation{ALPGEN}
\hyphenation{EVTGEN}
\hyphenation{PYTHIA}
\stackMath
\newcommand{\stretchedhat}[1]{%
\savestack{\tmpbox}{\stretchto{%
  \scaleto{%
    \scalerel*[\widthof{\ensuremath{#1}}]{\kern.1pt\mathchar"0362\kern.1pt}%
    {\rule{0ex}{\textheight}}
  }{\textheight}%
}{2.4ex}}%
\stackon[-6.9pt]{#1}{\tmpbox}%
}
\newcommand{\stretchedtilde}[1]{%
\savestack{\tmpbox}{\stretchto{%
  \scaleto{%
    \scalerel*[\widthof{\ensuremath{#1}}]{\kern.1pt\mathchar"307E\kern.1pt}%
    {\rule{0ex}{\textheight}}
  }{\textheight}%
}{2.4ex}}%
\stackon[-6.9pt]{#1}{\tmpbox}%
}
\newtheorem{theorem}{Theorem}

\newtheorem{corollary}{Corollary}
\newtheorem{lemma}{Lemma}

\theoremstyle{definition}

\newcommand{\half}{\frac{1}{2}}

\newcommand{\imi}{\mathrm{i}}
\newcommand{\qexp}[3]{\vphantom{#2}\left\langle{#1}\middle\vert\smash{#2}\middle\vert{#3}\right\rangle}

\newcommand{\Hil}{{\mathcal H}}

\newcommand{\Th}{\text{th}}

\newcommand{\dprime}{\prime\prime}

\newcommand{\tikztriangle}{\raisebox{-0.3ex}{\begin{tikzpicture}
    [shift={(0,0)}]
    \coordinate (A) at (0,0);
    \coordinate (B) at (.3,0);
    \coordinate (C) at ($(A)!0.5!(B) + (0, {0.15*sqrt(3)})$);
    \fill[blue] (A) -- (B) -- (C) -- cycle;
    \draw[-] (A) -- (B) -- (C) -- cycle;
\end{tikzpicture}}}
\newcommand{\tikzsquare}{\raisebox{-0.3ex}{\begin{tikzpicture}
    [shift={(0,0)}]
    \coordinate (A) at (0,-0.15);
    \coordinate (B) at (0.15,0);
    \coordinate (C) at (0,.15);
    \coordinate (D) at (-.15,0);
    \fill[red] (A) -- (B) -- (C) -- (D) -- cycle;
    \draw[-] (A) -- (B) -- (C) -- (D) -- cycle;
\end{tikzpicture}}}

\newcommand{\bmu}{\bm{U}}
\newcommand{\gsun}[1]{\mathrm{SU}(#1)}

\newcommand{\pcsadd}{Center for Theoretical Physics of Complex Systems, Institute for Basic Science, Daejeon 34126, Republic of Korea}
\newcommand{\ustadd}{Basic Science Program, Korea University of Science and Technology, Daejeon 34113, Republic of Korea}

\definecolor{rkrPurple}{HTML}{73024F}

\begin{document}

\title{
    Compact localized currents in flat bands with broken time-reversal symmetry
}

\author{Rohit Kishan Ray\,\orcidlink{0000-0002-5443-4782}}
    \email{rkray@vt.edu}
    \affiliation{\pcsadd}
    \affiliation{
        Department of Material Science and Engineering, Virginia Tech, Blacksburg, VA 24061, USA
    }

\author{Carlo Danieli\,\orcidlink{}}
    \email{carlo.danieli@cnr.it}
    \affiliation{
        Istituto dei Sistemi Complessi, Consiglio Nazionale delle Ricerche, Via dei Taurini 19, 00185 Rome, Italy
    }

\author{Alexei Andreanov\,\orcidlink{0000-0002-3033-0452}}
    \email{alexei.pcs@fastmail.org}
    \affiliation{\pcsadd}
    \affiliation{\ustadd}

\author{Sergej Flach\,\orcidlink{}}
    \email{sflach@ibs.re.kr}
    \affiliation{\pcsadd}
    \affiliation{\ustadd}
    \affiliation{
        Centre for Theoretical Chemistry and Physics, The New Zealand Institute for Advanced Study (NZIAS),
        Massey University Albany, Auckland 0745, New Zealand
    }

\date{\today}

\begin{abstract}
    We develop a systematic framework for constructing all-bands-flat (ABF) lattice Hamiltonians that explicitly break time-reversal symmetry (TRS).
    By threading magnetic flux through disconnected polygonal plaquettes and applying local entangling unitary transformations, we map plaquettes onto families of ABF models in one, two, and three dimensions.
    This procedure preserves the flux configuration while converting semi-detangled geometries into ABF lattices with nontrivial hopping structure.
    The resulting flat bands admit compact localized states (CLSs) whose support includes both the flux-threaded plaquettes and auxiliary sites introduced by the unitary transformations.
    In these TRS-broken constructions, the CLSs host localized circulatory currents whose magnitude depends on the applied flux. 
    We further extend the framework to lattices with coexisting flat and dispersive bands, illustrating cases with both orthogonal and non-orthogonal CLSs.
    Our results provide a controlled route for generating dispersionless lattices supporting flux-induced local currents.
\end{abstract}
\maketitle

\section{Introduction}
\label{sec:intro} 

Flat bands (FBs), {\it i.e.}, completely dispersionless energy spectra across the entire Brillouin zone, represent a remarkable class of electronic and photonic structures in condensed-matter physics and optics~\cite{derzhko_2015_strongly,leykam_2018_artificial,leykam2018perspective,vicencio2021photonic,rhim2021singular,leykam_2024_flat,danieli_2024_flata}. 
This absence of dispersion in lattice systems with short-range hopping causes the FB eigenstates to be non-zero only within a finite region due to destructive wave interference~\cite{read2017compactly}.
Hence, these states are referred to as compact localized states (CLSs). 
Interest in flat-band networks stems from their extreme sensitivity to perturbations, which gives rise to unconventional phenomena and a variety of nontrivial phases. 
Furthermore, FBs have been experimentally realized in a number of setups, from photonic waveguides~\cite{vicencio2015observation,mukherjee2015observation,weimann2016transport} to ultra-cold atoms~\cite{taie2015coherent,he2021flat}, electrical circuits~\cite{chase2024compact,lape2025realization} and acoustic~\cite{ma2021acoustic,shen2022observing} meta-materials. 


Since the inception of flat-bands~\cite{sutherland1986localization,mielke1991ferromagnetism,tasaki1992ferromagnetism}, one of the most active areas of research has been the generation and the classification of FB lattices. 
Generating protocols developed over the years include methods based on chiral symmetry~\cite{mielke1991ferromagnetism}, origami schemes~\cite{dias2015origami}, repetition of mini-arrays~\cite{morales-inostroza_2016_simple}, local symmetries~\cite{rontgen_2018_compact}, among many others~\cite{hwang2021general,calugaru2022general,ryu2024orthogonal}. 
A more systematic approach emerged from realizing that FBs can be generated by assuming a specific CLS profile and
and fine-tuning the Hamiltonian networks that support them.
This CLS-based approach not only yielded systematic flat-band generators in various dimensions~\cite{maimaiti2017compact,maimaiti2019universal,maimaiti2021flat}, but also provided a pathway for an exhaustive classification of FBs. 
FBs indeed can be classified according to the orthogonality and completeness of the CLS set~\cite{kim2025real}.
In general, CLSs form a linearly independent set of non-orthogonal states.
In certain cases however the CLS set 
(i) turns into an orthonormal set~\cite{flach2014detangling}; or 
(ii) becomes a linearly dependent set.  
The latter class, which exists only for \(D\geq 2\) spatial dimensions is also known as a singular flat band~\cite{rhim2019classification,rhim2021singular,graf2021designing,kim2023general}.
In this case, the linear dependence of the CLS enforces that at least one the remaining dispersive bands touches the flat band.

A notable case of orthonormal FBs is when a lattice completely lacks dispersion --- \emph{i.e.}, all spectral Bloch bands are flat.
Such systems are known as all-bands-flat (ABF) lattices. 
This scenario was introduced by Vidal \emph{et.al.}, demonstrating that certain lattices threaded with fine-tuned magnetic flux feature exclusively flat bands~\cite{vidal1998aharonov,vidal2000interaction}.
In this case, since all eigenstates are compact, any initially localized state will remain confined within a finite region--- a phenomenon known as caging, which leads to a complete suppression of transport.
The elimination of the single-particle transport mechanisms offers a fertile platform for investigating strongly correlated phases in the interacting case, from nonlinear caging~\cite{gligoric2019nonlinear,diliberto2019nonlinear,danieli2021nonlinear} to 
many-body CLSs~\cite{tovmasyan2018preformed,tilleke2020nearest,danieli2021quantum},   
quantum scars~\cite{hart2020compact,kuno2020flat_qs,kuno2021multiple,pelegri2024few}, and exact many-body flat band localization~\cite{danieli2020many,kuno2020flat22,orito2021nonthermalized}.
ABFs have also been experimentally realized with photonic waveguides~\cite{mukherjee2018experimental,kremer2020square,Jorg2020artificial,caceres2022controlled,roman2025observation,vicencio2025multi}, photonics~\cite{xia2025fully}, Fock space~\cite{yang2023realization}, cold atoms~\cite{kang2020creutz,li2022aharonov}, acoustic metamaterials~\cite{samak2024direct}, electric~\cite{wang2022observation, lape2025realization} and superconducting~\cite{martinez2023interaction} circuits. 

We focus our attention on ABF lattices where time-reversal symmetry (TRS) is explicitly broken by magnetic flux. 
This focus is motivated by experiments where intrinsic magnetic fields were used to break the TRS in systems featuring nearly flat bands, leading to anomalous quantum hall effect~\cite{han_2024_Largequantum, lu_2025_Extendedquantum}. 
Furthermore, it has been demonstrated that introducing multiple fluxes can fine-tune photonic lattices into ABF regime while breaking TRS~\cite{brosco2021two}.
These developments raise several fundamental questions that motivate the present work:
(i) can ABF lattices with broken TRS be systematically constructed,
(ii) can this construction be extended to lattices supporting both flat and dispersive bands;
and 
(iii) do the broken TRS induce localized currents in the CLS. 
In this work, we address these questions by developing a systematic scheme for constructing ABF models with broken TRS.
The scheme builds on the local detangling properties characteristic of ABF lattices~\cite{danieli2021nonlinear}.
We construct explicit parametric families of such lattices in one, two, and three dimensions, and show that their CLSs support localized circulating currents whose magnitude depends on the applied magnetic flux.
We further extend the framework to generate systems featuring coexisting flat and dispersive bands, illustrating examples with both orthogonal and non-orthogonal CLSs.

This paper is organized as follows.
In Section~\ref{sec:TRS}, we present the theoretical background.
Section~\ref{sec:construction} details our construction method for engineering TRS-broken ABF models in 1D, 2D, and 3D, and demonstrates the resulting localized currents.
In Section~\ref{sec:construction_FB}, we extend this discussion to lattices with coexisting flat and dispersive bands. 
Finally, Section~\ref{sec:discussion} summarizes our findings and presents the conclusions.

\section{Time reversal symmetry in Hamiltonian lattices}
\label{sec:TRS}

Let us consider a \(D\) dimensional lattice with \(\nu\) bands (\textit{i.e.} \(\nu\) sites per unit-cell)
described by an Hermitian Hamiltonian
\begin{align}
    \label{eq:H}
    \begin{split}
        H  &= \sum_{{\bf m}} \sum_{\mu,\eta}\, [H_0]_{\mu,\eta}\,\ketbra{{\bf m},\mu}{{\bf m},\eta}\, \\
        &+\sum_j \sum_{\langle {\bf m},{\bf m}' \rangle_{s,j}}\sum_{\mu,\eta}\, [H_j^s]_{\mu,\eta}\,\ketbra{{\bf m},\mu}{ {\bf m}',\eta} + \mathrm{h.c.}
    \end{split}
\end{align} 
The basis states $\ket{{\bf m},\mu} = a_{{\bf m},\mu}^\dagger\ket{\varnothing}$ are defined through the creation and annihilation operators $a_{{\bf m},\mu}^\dagger$ and $a_{{\bf m},\mu}$.
The multi-index ${\bf m}\in\mathbb{Z}^D$ locates a unit cell within the lattice and the index $1\leq \mu\leq \nu$ labels sites within each cell. 
The wave function is then expressed as $\ket{\Psi} = \sum_{{\bf m},\mu } \psi_{{\bf m},\mu} \ket{{\bf m},\mu}$. 
The Bravais lattice is defined by the primitive lattice translation vectors $\Vec{a}_j$ for $1\leq j\leq D$. 
The square matrix $H_0$ of size $\nu$ defines the unit cell profile, while the square matrices $H_j^s$ define the hopping between $s^\mathrm{th}$ order-neighboring unit cells $\langle {\bf m},{\bf m}' \rangle_{s,j}$ at distance $|{\bf m}-{\bf m}'|=s$ along the vectors $\Vec{a}_j$.
Here we consider only finite range hopping between unit cells, $s<+\infty$, and finite number of Bloch bands, $\nu<+\infty$.  
The eigenvalue problem of the Hamiltonian $H$ in Eq.~\eqref{eq:H} is $E\ket{\Phi}=H\ket{\Phi}$.
Using the Bloch space representation $\ket{\Psi} = e^{\imi {\bf k} \cdot {\bm{m}}} \ket{\Phi}$ for the wave vector ${\bf k}$ yields $\nu$ Bloch bands given by $\{E_\mu({\bf k})\}_{\mu=1}^\nu$.

Let us now introduce the concept of time-reversal symmetry for Hamiltonian lattices.
An Hamiltonian \(H\) acting on a Hilbert space $\Hil$ is time-reversal symmetric (TRS) if there exist an anti-unitary time-reversal operator \(\mathcal{T}\) such that~\cite{wigner_2012_group,cordova_2018_timereversal} 
\begin{gather}
    \label{eq:trs_def}
    \mathcal{T}H\mathcal{T}^{{-}1} = H. 
\end{gather}
The anti-unitary time-reversal operator $\mathcal{T}$ can be written as a composition of two operators as --- \(\mathcal{T} = \mathcal{U} \mathcal{K} \) where \( \mathcal{U}\) is a unitary rotation and \( \mathcal{K}\) is the standard complex conjugation.
In case of spinless Hamiltonians (the focus of our work) we have \( \mathcal{U} = \mathbb{I}\) and the time-reversal operator reduces to \(\mathcal{T} = \mathcal{K} \).
In our context, TRS yields a \({\bf k}\to {-}{\bf k}\) symmetry in the Bloch energy bands, which is typically broken in the absence of TRS.
Importantly, time-reversal symmetry is not generically preserved under arbitrary unitary transformations.
When a Hamiltonian is transformed as $H^\prime = \bmu H \bmu^\dagger$, a necessary and sufficient condition for TRS preservation is that the operator $\mathcal{W}:= \bmu^\dagger\mathcal{T}\bmu\mathcal{T}^{{-}1}$ commutes with $H$(see Appendix~\ref{appA:TRS}).
In the case of spinless Hamiltonians, it can be shown that the above condition reduces to $\bmu = \bmu^\ast$. 
Therefore, for a unitary drawn at random from the full unitary group \(\mathrm{U}(n)\), TRS is almost surely broken unless:
(i) we specifically restrict $\bmu$ to be a real (orthogonal) matrix, or
(ii) $\bmu$ is such that the transformed Hamiltonian $H^\prime \neq H^{\prime\ast}$~\footnote{we denote the complex conjugate of a matrix $\bm{A}$ by $\bm{A}^\ast$} and $\bmu^\dagger\mathcal{T}\bmu\mathcal{T}^{{-}1}$ commutes with $H$~\cite{heinzner_2005_symmetry} (see Appendix~\ref{appA:TRS}).
%

\section{All-bands-flat lattices with broken time reversal symmetry}
\label{sec:construction}

In this section, we focus on constructing Hamiltonian lattices $H$ in Eq.~\eqref{eq:H} where each of the $\nu$ Bloch bands is independent of momentum ${\bf k}$ over the entire Brillouin zone, $E_\mu({\bf k}) = \mathrm{const}$ for $1\leq \mu\leq \nu$.
This class of lattice networks lacking any band dispersion is called all-band-flat (ABF).
This complete flattening of the spectral bands is highly nontrivial, as all eigenstates are macroscopically degenerate and are spatially compact, hence preventing any single particle transport along the lattice --- a phenomenon that is also known as caging.
The flat band eigenstates are called compact localized states (CLSs).


Following Ref.~\onlinecite{danieli2021nonlinear}, a representation of an ABF lattice with Hamiltonian $H$ in Eq.~\eqref{eq:H} featuring both non-zero intra-cell matrix $H_0$ and inter-cell hopping matrices $H_j^s$ is called \emph{Non-Detangled} (ND)~\cite{danieli2020many}. 
In this case the values of the hopping profile encoded in the matrices $H_0$ and $H_j^s$ are fine-tuned to impose destructive interference --- hence, resulting in the formation of compact localized states and the emergence of flat bands.
The CLS of an ABF lattice form an orthogonal and complete set of the Hilbert space~\cite{danieli2021nonlinear,sathe2025topological}.
The orthogonality and completeness of the CLS imply that there exist specific choices of the unit-cell representation that eliminate all the inter-cell hopping matrices, $H_j^s = 0$.
This type of representations is called \emph{Semi-Detangled} (SD), and the lattice in then formed by decoupled unit cells~\footnote{
The choice of unit cell representation can be further fine-tuned in order to diagonalize $H_0$ in Eq.~\eqref{eq:H} while keeping $H_j^s=0$.
In this case, the ABF lattices are called Fully-Detangled.}. 
An all-band-flat lattice therefore can be recast from the ND representation into the SD representation (and \emph{vice versa}) via local unitary operators $\bmu$ -- \textit{i.e.} operating between neighboring unit-cells~\cite{danieli2021nonlinear}. 

We now use this entangling process via local unitary operator to construct ABF lattices with broken TRS.
The construction scheme begins from the SD representation of an ABF. 
First, we choose plaquettes (\emph{i.e.,} a closed loop of $M$ sites that does not intersect itself) threaded with a magnetic field as detangled unit cells --- hence breaking TRS.
We switch to an ND representation through the rotations $\bmu$ that do not alter TRS status of the lattice.
The rotation $\bmu$ is formed of local unitary transformations $U\!\in\!\gsun{r}$ involving $r$ sites of two or more neighboring plaquettes.   
Such procedure however, in general produces highly intricate hoppings which may have limited physical applicability, although mathematically sound. 
To counter this and generate the ABFs without deforming the plaquettes with threaded magnetic fields, we attach auxiliary sites to the plaquettes (a.k.a appendices) and apply the unitaries $\bmu$ to entangle those neighboring appendices. 

We illustrate our scheme with a one dimensional lattice with the smallest possible number of Bloch bands.
We progress subsequently to more complex cases.

\subsection{A simple case}
\label{sec:construction_ABF_1d}

Let us exemplify our construction scheme through the simplest case: a one-dimensional ABF lattice built from disjoint triangular plaquettes.
We break TRS in each plaquette by threading them with a magnetic field of flux \( \phi\)~\footnote{A triangular plaquette constitutes the minimal loop necessary to induce TRS breaking via magnetic field}.
The corresponding matrix $H_0$ in Eq.~\eqref{eq:H} reads
\begin{align}
    \label{eq:1d_trig}
    H_{0,\,\scalebox{0.7}{$\Delta $}} = 
    \begin{pmatrix}
        0 & e^{\frac{\imi \phi}{3}} & e^{-\frac{\imi \phi}{3}} \\
        e^{-\frac{\imi \phi}{3}} & 0 & e^{\frac{\imi \phi}{3}} \\
        e^{\frac{\imi \phi}{3}} & e^{-\frac{\imi \phi}{3}} & 0
    \end{pmatrix}
\end{align}
This plaquette forms the unit-cell of the lattice Hamiltonian \(H\) of Eq.~\eqref{eq:H} and the resulting structure corresponds to a semi-detangled (SD) ABF lattice, Fig.~\ref{fig:1D_ABF}(a).
A triangular plaquette in Fig.~\ref{fig:1D_ABF}(a) has $\nu=3$ flat bands (eigenstates of \(H_0\)~\eqref{eq:1d_trig})
\begin{align}
    E_1 = 2 \cos\frac{\phi}{3}, \qquad E_{2,3} = -\cos\frac{\phi}{3} \pm \sqrt{3} \sin\frac{\phi}{3}
    \label{eq:ABF_tr}
\end{align}
These flat band eigenstates are completely localized within each plaquette yet carry a non-zero probability current due to the enclosed magnetic flux.
We can compute this probability current $\mathcal{J}_{\bm{m};\,\mu,\eta}$ induced by a wavefunction $\psi$ between two adjacent lattice sites $\mu$ and $\eta$ of neighboring unit-cell at $\bf{m}$ and $\bf{m^\prime}$, respectively, using~\cite{shankar_1994_PrinciplesQuantum}
\begin{gather}
    \label{eq:prob_current}
    \mathcal{J}_{\bm{m},\mu;\bm{m^\prime},\eta} = \imi \left\{\psi^\ast _{\bm{m},\,\mu} [\mathit{H}]_{\mu,\eta} \psi_{\bm{m^\prime},\,\eta} - \psi_{\bm{m^\prime},\,\eta}^\ast [\mathit{H}^\dagger]_{\mu,\eta} \psi_{\bm{m},\,\mu} \right\},
\end{gather}
where $\psi_{\bm{m},\,\nu}$ is the $\nu^\Th$ site component of the wavefunction of unit-cell at $\bf{m}$.
Here $H$ is the intra-cell hopping matrix $H_0$ if $\bm{m}=\bm{m^\prime}$, else it is the inter-cell hopping matrix $H_j^s$ between neighboring unit-cells at $\bm{m}$ and $\bm{m^\prime}$ in the $\Vec{a}_j$ direction.
For instance, the eigenstates associated to $E_1=2\cos \frac{\phi}{3}$ has equal amplitudes in all three sites $\psi_{m,\mu}=\frac{1}{\sqrt{3}}$ for $\mu=1,2,3$, which implies $\mathcal{J}_{m,\mu,m,\eta} = \frac{2}{3}\sin \frac{\phi}{3}$ in each bond. 
The total current in a flux-threaded plaquette $m$ corresponding to a CLS is computed via the sum $\mathcal{J} = \sum_\mu \mathcal{J}_{m,\mu,m,\mu+1}$ using Eq.~\eqref{eq:prob_current}. 
In the lattice in Fig.~\ref{fig:1D_ABF}(a), 
the total current corresponding to the CLS associated to $E_1=2\cos \frac{\phi}{3}$ is then $\mathcal{J} =2\sin \frac{\phi}{3}$. 

\begin{figure}
    \centering
    \includegraphics[width=\columnwidth]{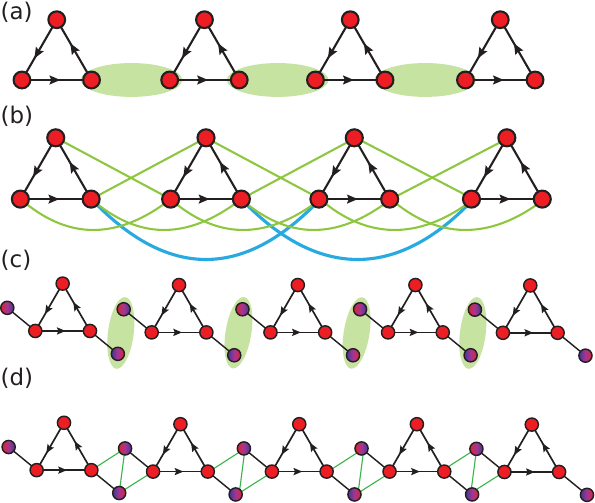}
    \caption{
        (a) 1D lattice formed by stacking unit cells, that are triangles threaded by a flux \( \phi\).
        Green shadings mark the sites to which local unitary operation is applied
        (b) Resulting lattice after the entangling rotations.
        (c,d) Same as (a,b) but for triangular plaquettes with auxiliary sites. 
    }
    \label{fig:1D_ABF}
\end{figure}

To transition from SD to ND representation without breaking the TRS nature of the Hamiltonian $H$, (see Appendix~\ref{appA:TRS} for details), we design a suitable unitary transformation $\bmu$ formed of rotations $U \in \gsun{2}$ 
\begin{gather}
    \label{eq:SO2_U}
    U = 
    \begin{pmatrix}
        \cos\theta & \sin\theta \\
        -\sin\theta & \cos\theta
    \end{pmatrix}
\end{gather} 
for an angle $\theta\in[0,2\pi]$.
This transformation $\bmu$ acts on pairs of sites from neighboring triangular plaquettes --- as shown in Fig.~\ref{fig:1D_ABF}(a) with green shading (see Appendix~\ref{appA:unitary_su2} for details on the construction of $\bmu$).  
The resulting ND lattice is shown in Fig.~\ref{fig:1D_ABF}(b).
The intra-cell matrix of its Hamiltonian $H^\prime$ is 
\begin{align}
    H^\prime_{0,\,\scalebox{0.7}{$\Delta $}} = 
    \begin{pmatrix}
        0 & e^{\frac{\imi\phi}{3}} \cos \theta  & e^{-\frac{\imi\phi}{3}} \cos \theta  \\
        e^{-\frac{\imi\phi}{3}} \cos \theta  & 0 & e^{\frac{\imi\phi}{3}} \cos \theta  \\
        e^{\frac{\imi\phi}{3}} \cos \theta  & e^{-\frac{\imi\phi}{3}} \cos \theta  & 0
    \end{pmatrix}
\end{align}
while the nearest-neighbor hopping matrix is
\begin{align}
    H^{1\,\prime}_{1,\,\scalebox{0.7}{$\Delta $}} = 
    \begin{pmatrix}
        0 & 0 & -e^{\frac{\imi \phi}{3}} \sin \theta  \\
        e^{\frac{\imi \phi}{3}} \sin \theta  & \half e^{-\frac{\imi \phi}{3}} \sin 2\theta  & 0 \\
        0 & 0 & -e^{-\frac{\imi \phi}{3}} \sin \theta
    \end{pmatrix}
\end{align}
Additionally, this lattice has next-to-nearest neighbor hopping terms, which are encoded in the matrix 
\begin{gather}
    H^{2\,\prime}_{1,\,\scalebox{0.7}{$\Delta $}} = 
    \begin{pmatrix}
        0 & 0 & 0 \\
        0 & 0 & -e^{-\frac{\imi\phi}{3}} \sin ^2\theta  \\
        0 & 0 & 0
    \end{pmatrix}
\end{gather}
In Fig.~\ref{fig:1D_ABF}(b), the hopping terms encoded by these three matrices are denoted by bonds colored in black, green and blue, respectively.

This construction does produce a family of ABF lattices with broken TRS.
However, the resulting inter-cell hoppings in general have complex topology, Fig.~\ref{fig:1D_ABF}(b), deviating from the  original objective of simple and intuitive lattices.
To counter this, we embed each triangular plaquette in Fig.~\ref{fig:1D_ABF}(a) in an extended unit cell by appending two auxiliary sites, as shown in Fig.~\ref{fig:1D_ABF}(c).
This extends the unit cell to five sites, whose intra-cell matrix $H_0$ reads  
\begin{gather}
     \label{eq:1d_trig_2}
    H_{0,\,\scalebox{0.6}{\tikztriangle}} = 
    \begin{pmatrix}
        0 & e^{\frac{\imi \phi}{3}} & e^{-\frac{\imi \phi}{3}} & 1 & 0 \\
        e^{-\frac{\imi \phi}{3}} & 0 & e^{\frac{\imi \phi}{3}} & 0 & 1  \\
        e^{\frac{\imi \phi}{3}} & e^{-\frac{\imi \phi}{3}} & 0 & 0 & 0 \\
        1 & 0 & 0 & 1 & 0 \\
        0 & 1 & 0 & 0 & 1
    \end{pmatrix}
\end{gather} 
Similar to the case of the triangular plaquette in Eq.~\eqref{eq:1d_trig}, the five CLSs posses non-zero currents within the plaquette, which can be computed via Eq.~\eqref{eq:prob_current}. 

A unitary transformation $\bmu$ that entangles this SD lattice, in this case involves the auxiliary sites of two adjacent plaquettes --- as shown with green shaded areas of Fig.~\ref{fig:1D_ABF}(c). 
The corresponding intra- and inter-cell matrices $H_0$ and $H_1$ of the ND Hamiltonian become
\begin{gather}
\label{eq:H0Hx_1d_2}
\begin{split}
     H^\prime_{0,\, \scalebox{0.6}{\tikztriangle}}
     &= 
    \begin{pmatrix}
        0 & e^{\frac{\imi \phi}{3}} & e^{-\frac{\imi \phi}{3}} & \cos (\omega ) & 0 \\
        e^{-\frac{\imi \phi}{3}} & 0 & e^{\frac{\imi \phi}{3}} & 0 & 1 \\
        e^{\frac{\imi \phi}{3}} & e^{-\frac{\imi \phi}{3}} & 0 & 0 & 0 \\
        \cos \omega  & 0 & 0 & 1 & 0 \\
        0 & 1 & 0 & 0 & 1
    \end{pmatrix} \\
     H^\prime_{1,\,\scalebox{0.6}{\tikztriangle}} &= 
    \begin{pmatrix}
        0 & 0 & 0 & 0 & -\sin \omega  \\
        0 & 0 & 0 & 0 & 0 \\
        0 & 0 & 0 & 0 & 0 \\
        0 & \sin \omega  & 0 & 0 & 0 \\
        0 & 0 & 0 & 0 & 0
    \end{pmatrix}
\end{split}
\end{gather}
With respect to the previous ND lattice shown in Fig.~\ref{fig:1D_ABF}(b), the hopping terms that connect neighboring plaquettes are simplified to rhombii. 
Intuitively, these appendices induce ``directions" along which the disconnected plaquettes are entangled, forming a lattice.
This simplification of the hopping structure in the ND representation comes at the cost of the number of bands increasing from three to five, and subsequent increase of the dimensionality of Hilbert space.
Nevertheless, the entangling process does not perturb the triangular plaquettes, nor the CLS components sitting therein.
Hence, the local currents $\mathcal{J}_{\mu,\eta}$ are untouched.
These CLSs, which exhibit localized probability current due to magnetic flux within the plaquette act as \emph{vortices} in the lattice.
Consequently, TRS breaking induces a macroscopically degenerate family of linear vortex states.


\subsection{Two and three dimensional lattices}
\label{sec:kritzel_kratzel} 

\begin{figure}
    \centering
    \includegraphics[width=\columnwidth]{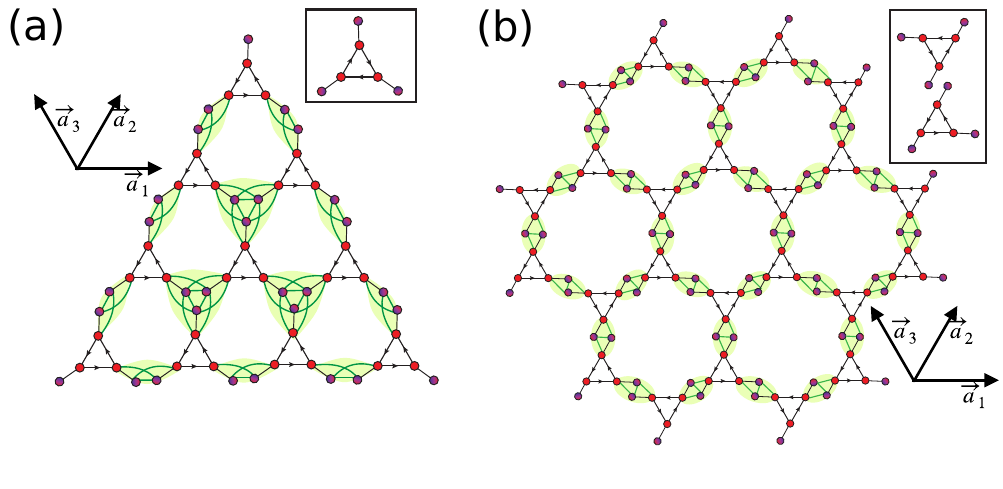}
    \caption{
        (a) 2D ABF lattice with a different tiling of triangular unit cells (inset).
        In both cases hoppings induced by local unitary transformation are shown by green bonds.    
        (b) 2D ABF lattice generated from hexagonal tiling of triangular unit cells threaded by a flux \( \phi\) (inset). 
        In both panels, $\Vec{a}_1$ and $\protect\Vec{a}_2$ are the primitive translation vectors, with $\protect\Vec{a}_3 = \protect\Vec{a}_2 - \protect\Vec{a}_1$.
}
    \label{fig:2D_ABF}
\end{figure}

In one dimension, the construction scheme shown in Fig.~\ref{fig:1D_ABF} which starts from flux-threaded triangular plaquettes can be straightforwardly extended to larger plaquettes, \textit{e.g.,} squares, pentagons, and beyond.
The choices of the construction of the entangling unitary transformations $\bmu$ and the placement of auxiliary sites create a wide variety of ABF lattices with broken time-reversal symmetry. 
In the previous section, our results also highlighted that reducing the complexity of the hopping profile of the lattice can be achieved by increasing the number of sites per unit cell (\textit{i.e.} the number of flat bands).
We now extend this construction scheme to higher dimensional lattices, where an additional choice in the construction is the arrangement of the primitive lattice translation vectors $\Vec{a}_j$ in Eq.~\eqref{eq:H}, \textit{i.e.} the choice of the Bravais lattice. 

To describe the construction in higher dimensions, as in the previous subsection, we use triangular plaquettes. 
In two dimensions, the basic unit cell is shown in the inset of Fig.~\ref{fig:2D_ABF}(a), and it is formed by a flux-threaded three-sites loop with an appendix site attached at each corner.
Its Hamiltonian matrix reads
\begin{gather}
    H_{0,\,\scalebox{0.6}{\tikztriangle}} = 
    \begin{pmatrix}
        0 & e^{\frac{\imi \phi}{3}} & e^{-\frac{\imi \phi}{3}} & 1 & 0 & 0\\
        e^{-\frac{\imi \phi}{3}} & 0 & e^{\frac{\imi \phi}{3}} & 0 & 1 & 0 \\
        e^{\frac{\imi \phi}{3}} & e^{-\frac{\imi \phi}{3}} & 0 & 0 & 0 & 1\\
        1 & 0 & 0 & 1 & 0 & 0\\
        0 & 1 & 0 & 0 & 1 & 0\\
        0 & 0 & 1 & 0 & 0 & 1
    \end{pmatrix}
     \label{eq:1d_trig_3}    
\end{gather}
These six-sites unit cells are periodically repeated along the primitive vectors $\Vec{a}_1$ and $\Vec{a}_2$ shown in Fig.~\ref{fig:2D_ABF}(a), which form a $\frac{\pi}{3}$ angle.
The SD lattice is then entangled by applying local unitary rotations from \(\gsun{3}\) involving groups of three neighboring auxiliary sites.
This entangling procedure result in the green bonds shown in Fig.~\ref{fig:2D_ABF}(a) that link the neighboring flux-threaded triangular plaquettes without altering the TRS breaking of the system, see Appendix~\ref{appA:unitary_su3} for details. 
The resulting hopping network that links neighboring triangular plaquettes is again rather complex.
To reduce the complexity and design physical lattices, we extend the six site unit cell to a twelve site unit  cell by pairing two six site structures in Eq.~\eqref{eq:1d_trig_3} one on top of the other; as shown in the inset of Fig.~\ref{fig:2D_ABF}(b). 
These extended unit cells are then periodically arranged along the same primitive vectors as in Fig.~\ref{fig:2D_ABF}(a) so that only pairs of auxiliary sites will be entangled via \(\gsun{2}\) unitary rotations in Eq.~\eqref{eq:SO2_U}.
The bonds between the auxiliary sites created by entangling are shown in green.
This construction simplifies the hopping geometry while preserving the flux-threaded plaquettes; hence, resulting in a twelve-band ABF lattice whose geometry resembles the two-dimensional pyrochlore structure. 

\begin{figure}
    \centering
    \includegraphics[width=\columnwidth]{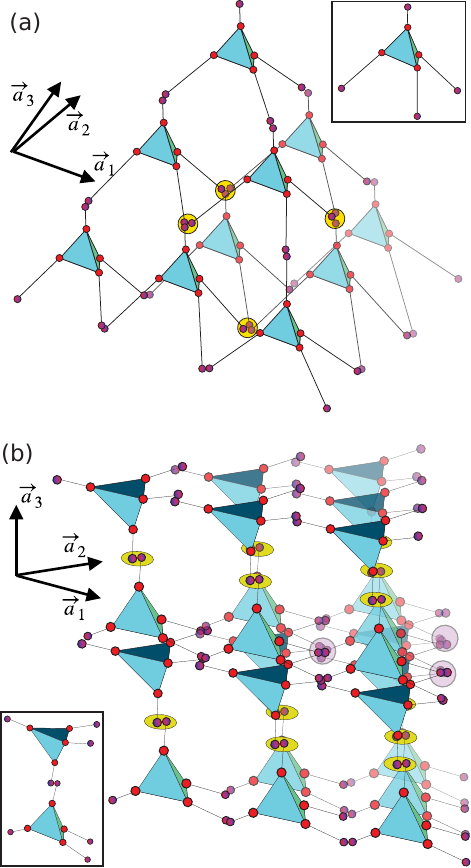}
    \caption{
        (a) 3D ABF lattice generated from pyramid-like tiling of tetrahedral unit cells with appendices (shown in the inset).
        (b) a different tiling producing a 3D ABF network with \(16\)-band and a tetrahedral unit cell with appendices (see inset).
        In both the panels, $\Vec{a}_1,\Vec{a}_2$ and $\Vec{a}_3$ are the primitive translation vectors and the sites undergoing local unitary transformations are shown in shaded yellow and purple ellipses. 
    }
    \label{fig:3D_ABF}
\end{figure}

In three dimensions, we extend the triangular plaquettes to tetrahedra, where each triangular face is threaded by a magnetic flux which ensures TRS breaking.
To each vertex of each tetrahedron we append an auxiliary site.
This results in an eight-site unit cell shown in Fig.~\ref{fig:3D_ABF}(a) whose matrix extends Eqs.~(\ref{eq:1d_trig_2},\ref{eq:1d_trig_3}) to an eight-by-eight matrix. 
These unit cells are periodically arranged along the primitive lattice vectors $\Vec{a}_1,\! \Vec{a}_2$ and $\Vec{a}_3$, and entangled using local \(\gsun{4}\) transformations acting on groups of neighboring auxiliary sites (yellow shaded regions in Fig.~\ref{fig:3D_ABF}(a)).
The resulting eight-band ABF lattice has a complex hopping structure that links the neighboring tetrahedra.
This structure, obtained via a set of \(\gsun{4}\) unitary rotations, can again be simplified by increasing the number of bands. 
Consider the case of Fig.~\ref{fig:3D_ABF}(b): here we extend the unit cell to contain \(16\) bands, and therefore require a combination of \(\gsun{2}\) (yellow shaded regions) and \(\gsun{8}\) (purple shaded regions) rotations to entangle the neighboring unit cells. 
These two and three dimensional cases, shown in Fig.~\ref{fig:2D_ABF} and Fig.~\ref{fig:3D_ABF}, highlight that reduction in hopping complexity of a generated ABF typically comes at the cost of increasing the number of Bloch bands.

\subsection{Currents in compact localized states}
\label{sec:squares}

In this subsection, we focus on currents $\mathcal{J}$ which are induced on the CLS associated to the flat bands by the flux.
These currents are obtained by summing all the local probability currents $\mathcal{J}_{\bm{m},\mu;\bm{m^\prime},\eta}$ defined in Eq.~\eqref{eq:prob_current}.  
To discuss this, and to further present the construction of this class of ABF lattices, we consider the case of a flux-threaded square plaquette ($M=4$) with four appendices at each vertex.
This results in an eight-site unit-cell whose matrix $H_0$ in the Hamiltonian $H$ in Eq.~\eqref{eq:H} is
\begin{gather}
    \label{eq:H0_SD}
    H_{0,\,\scalebox{0.6}{\tikzsquare}} = 
    \left[ \begin{smallmatrix}
        0 & e^{\frac{\imi \phi}{4}} & 0 & e^{-\frac{\imi \phi}{4}} & 1 & 0 & 0 & 0 \\
        e^{-\frac{\imi \phi}{4}} & 0 & e^{\frac{\imi \phi}{4}} & 0 & 0 & 1 & 0 & 0 \\
        0 & e^{-\frac{\imi \phi}{4}} & 0 & e^{\frac{\imi \phi}{4}} & 0 & 0 & 1 & 0 \\
        e^{\frac{\imi \phi}{4}} & 0 & e^{-\frac{\imi \phi}{4}} & 0 & 0 & 0 & 0 & 1 \\
        1 & 0 & 0 & 0 & -1 & 0 & 0 & 0 \\
        0 & 1 & 0 & 0 & 0 & 1 & 0 & 0 \\
        0 & 0 & 1 & 0 & 0 & 0 & 1 & 0 \\
        0 & 0 & 0 & 1 & 0 & 0 & 0 & -1 \\
    \end{smallmatrix}\right].
\end{gather} 

\begin{figure}
    \centering
    \includegraphics[width=\columnwidth]{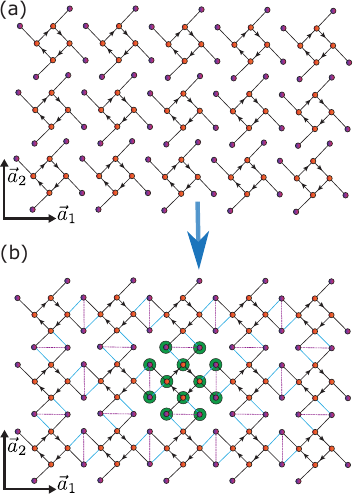}
    \caption{
        (a) Two dimensional semi-detangled ABF lattice obtained by tiling augmented square unit-cells.
        With the shaded areas we indicate the sites chosen for the local unitary operations $\bmu$.
        The green and the pink colors denote the $x$ and $y$ direction in which the two rotations $U_x$ and $U_y$ which form $\bmu$ are applied. 
        (b) Resulting non-detangled ABF lattice. 
        With the green circles we show the location of CLS. 
        In both the panels, $\Vec{a}_1$ and $\Vec{a}_2$ are the primitive translation vectors. 
    }
    \label{fig:ABF}
\end{figure}

We arrange these unit cells in two-dimensions following orthogonal primitive lattice vectors $\Vec{a}_1$ and $\Vec{a}_2$, which result in the SD lattice shown in Fig.~\ref{fig:ABF}(a). 
We entangle the auxiliary sites of the neighboring unit cells (highlighted area in pink and green in Fig.~\ref{fig:ABF}) via \(\gsun{2}\) unitary rotations $U_x$ and $U_y$ in the $x$ and $y$ directions.
In principle, these two local unitary rotations can be different.
For simplicity, considering the entangling rotation by $\omega$ in both direction,
the resulting intra-cell hopping matrix $H_{0,\,\scalebox{0.6}{\tikzsquare}}^\prime$ is
\begin{gather}
    \label{eq:h0_entangled}
    H_{0,\,\scalebox{0.6}{\tikzsquare}}^\prime= 
    \left[ \begin{smallmatrix}
        0 & e^{\frac{\imi\phi}{4}} & 0 & e^{-\frac{\imi\phi}{4}} & c & 0 & 0 & 0 \\
        e^{-\frac{\imi\phi}{4}} & 0 & e^{\frac{\imi\phi}{4}} & 0 & 0 & c & 0 & 0 \\
        0 & e^{-\frac{\imi\phi}{4}} & 0 & e^{\frac{\imi\phi}{4}} & 0 & 0 & c & 0\\
        e^{\frac{\imi\phi}{4}} & 0 & e^{-\frac{\imi\phi}{4}} & 0 & 0 & 0 & 0 & c\\
        c & 0 & 0 & 0 & -C_\text{\footnotesize $\omega$} & 0 & 0 & 0 \\
        0 & c & 0 & 0 & 0 & C_\text{\footnotesize $\omega$} & 0 & 0 \\
        0 & 0 & c & 0 & 0 & 0 & C_\text{\footnotesize $\omega$} & 0 \\
        0 & 0 & 0 & c & 0 & 0 & 0 & -C_\text{\footnotesize $\omega$}
    \end{smallmatrix}\right],
\end{gather}
while the inter-cell hopping matrices $H_{x,\,\scalebox{0.6}{\tikzsquare}}^\prime$, and $H_{y,\,\scalebox{0.6}{\tikzsquare}}^\prime$ are 
\begin{align}
    \label{eq:hxy_entangled}
    \begin{split}
        H_{x,\,\scalebox{0.6}{\tikzsquare}}^\prime \!=\! 
        \left[ \begin{smallmatrix}
            0 & 0 & 0 & 0 & 0 & 0 & -s & 0 \\
            0 & 0 & 0 & 0 & 0 & 0 & 0 & 0 \\
            0 & 0 & 0 & 0 & 0 & 0 & 0 & 0 \\
            0 & 0 & 0 & 0 & 0 & 0 & 0 & 0 \\
            0 & 0 & s & 0 & 0 & 0 & S_\text{\footnotesize $\omega$} & 0\\
            0 & 0 & 0 & 0 & 0 & 0 & 0 & 0 \\
            0 & 0 & 0 & 0 & 0 & 0 & 0 & 0 \\
            0 & 0 & 0 & 0 & 0 & 0 & 0 & 0 \\
        \end{smallmatrix}\right], &\,
        H_{y,\,\scalebox{0.6}{\tikzsquare}}^\prime \!=\! 
        \left[ \begin{smallmatrix}
            0 & 0 & 0 & 0 & 0 & 0 & 0 & 0  \\
            0 & 0 & 0 & 0 & 0 & 0 & 0 & -s  \\
            0 & 0 & 0 & 0 & 0 & 0 & 0 & 0 \\
            0 & 0 & 0 & 0 & 0 & 0 & 0 & 0 \\
            0 & 0 & 0 & 0 & 0 & 0 & 0 & 0 \\
            0 & 0 & 0 & s & 0 & 0 & 0 & -S_\text{\footnotesize $\omega$}\\
            0 & 0 & 0 & 0 & 0 & 0 & 0 & 0 \\
            0 & 0 & 0 & 0 & 0 & 0 & 0 & 0 \\
        \end{smallmatrix}\right]
    \end{split}
\end{align}
where $C_\text{\footnotesize $\omega$} = \cos(2\omega)$, $S_\text{\footnotesize $\omega$} \equiv \sin(2\omega)$,  $c = \cos(\omega)$ and $s = \sin(\omega)$.
The resulting ND ABF lattice is shown in Fig.~\ref{fig:ABF}(b) corresponding to a family of two dimensional eight-bands ABF lattices with broken TRS.
Each square plaquette which lies in the center of a unit cell is threaded with a magnetic field of flux $\phi$.
In Fig.~\ref{fig:ABF}(b) the green shaded regions show the sites of the eight compact localized states of the system. 

\begin{figure}
    \centering
    \includegraphics[width=\columnwidth]{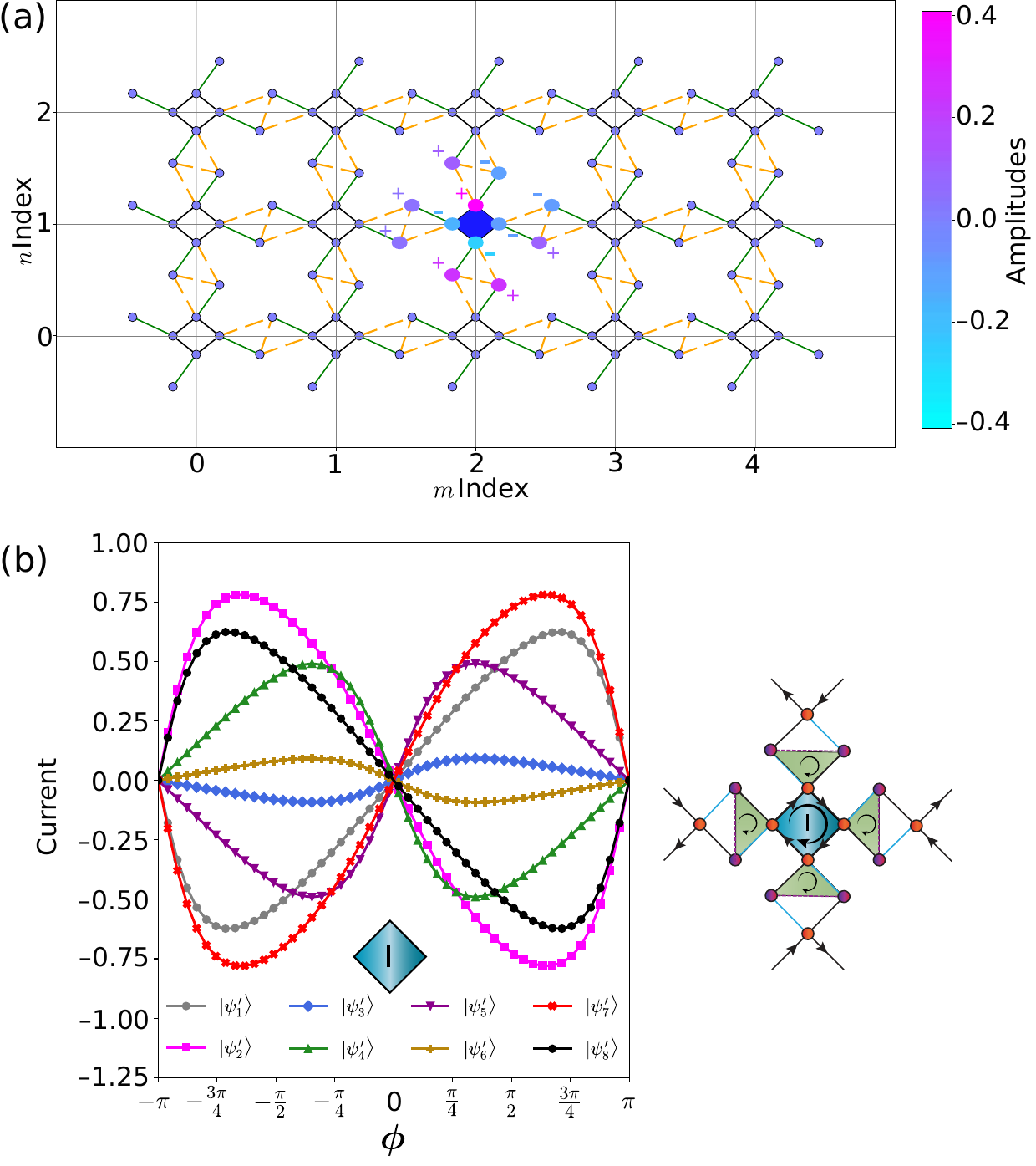}
    \caption{
        (a) CLS amplitude shown in bright colors, whose intensities are encoded in the right colorbar.
        The CLS sites are displayed as larger markers, and the signs of their amplitudes are explicitly indicated for clarity.
        The blue shaded region indicates where the current circulates. 
        (b) Probability current as a function of the flux $\phi$ for the eight different CLSs.
        In the lower right corner, we indicate the plaquettes where the current distribution is evaluated.
        On the right, it is indicated the plaquette labeled I where the current distribution is evaluated.
    }
    \label{fig:cls}
\end{figure}

In Fig.~\ref{fig:cls}(a), we show an example of one of the eight CLS at $E=-1.74$ of the non-detangle ABF lattice obtained for $\omega = \pi/4$ and $\phi=2$ in Eqs.~(\ref{eq:h0_entangled},\ref{eq:hxy_entangled}). 
The amplitudes of the CLS are shown with bright colors, the current in the plaquette is shown along with the associated colorbar.
The value of the current in the plaquette matches the corresponding value for the given parameters of Fig.~\ref{fig:cls}(b).
In Fig.~\ref{fig:cls}(b) we plot the probability currents $\mathcal{J}$ of all eight orthogonal CLSs as function of the flux $\phi$ for $\omega = \pi/4$.
These values are obtained by summing the local probability currents $\mathcal{J}_{\mu;\eta}$ defined in Eq.~\eqref{eq:prob_current} in the flux-threaded square plaquette highlighted in blue in the inset of Fig.~\ref{fig:cls}(b). 
All the currents $\mathcal{J}$ are non-zero, except for $\phi=0,\pm\pi$ where time-reversal symmetry is restored in the lattice. 


\section{Generic flat band lattices with broken time reversal symmetry}
\label{sec:construction_FB}


In this section, we focus on constructing Hamiltonian lattices $H$ in Eq.~\eqref{eq:H} with broken time-reversal symmetry featuring coexisting dispersive and flat bands. 
In other words, only a subset of the Bloch bands is independent of momentum ${\bf k}$ over the entire Brillouin zone, $E_\mu({\bf k}) = \mathrm{const}$ for $1\leq \mu\leq \ell< \nu$. 
In this case however, differently from the ABF lattices discussed in the previous section, the CLS of a flat band can either form an orthogonal complete set, or form a non-orthogonal set~\cite{kim2025real}. 
On the one hand, orthogonal CLS sit within single unit cells of the flat band lattices, and they can be detangled from the remaining dispersive states via local unitary rotations $\bmu$~\cite{flach2014detangling}.
On the other hand, non-orthogonal CLS span over two or more unit cells, and they cannot be all detangled from the dispersive states via local unitary rotations~\cite{leykam2017localization}. 
In the following therefore, we treat the cases of orthogonal and non-orthogonal CLS separately.

\subsection{Orthogonal CLS}
\label{sec:orthoCLS}

Let us start with flat band lattices supporting orthogonal CLS.
In the case, analogously to Sec.~\ref{sec:construction}, the construction scheme of flat band lattices with broken TRS relies on unitary rotations $\bmu$ that entangle disconnected flat band states into a dispersive lattice~\cite{flach2014detangling}. 
The scheme begins by first arranging flux-threaded plaquettes adjacently to a dispersive lattice.
The total number of bands therefore is $M+\nu_d$, where $M$ is the size of the plaquette, while $\nu_d$ is the number of dispersive bands. 
We then apply a unitary transformation $\bmu$ formed of local rotations $U$ that entangle each plaquette with one unit-cell of the dispersive lattice~\cite{flach2014detangling}.
This generally results in a complex lattices, which again can be simplified by attaching auxiliary sites to each plaquette. 

\begin{figure}
    \centering
    \includegraphics[width=\columnwidth]{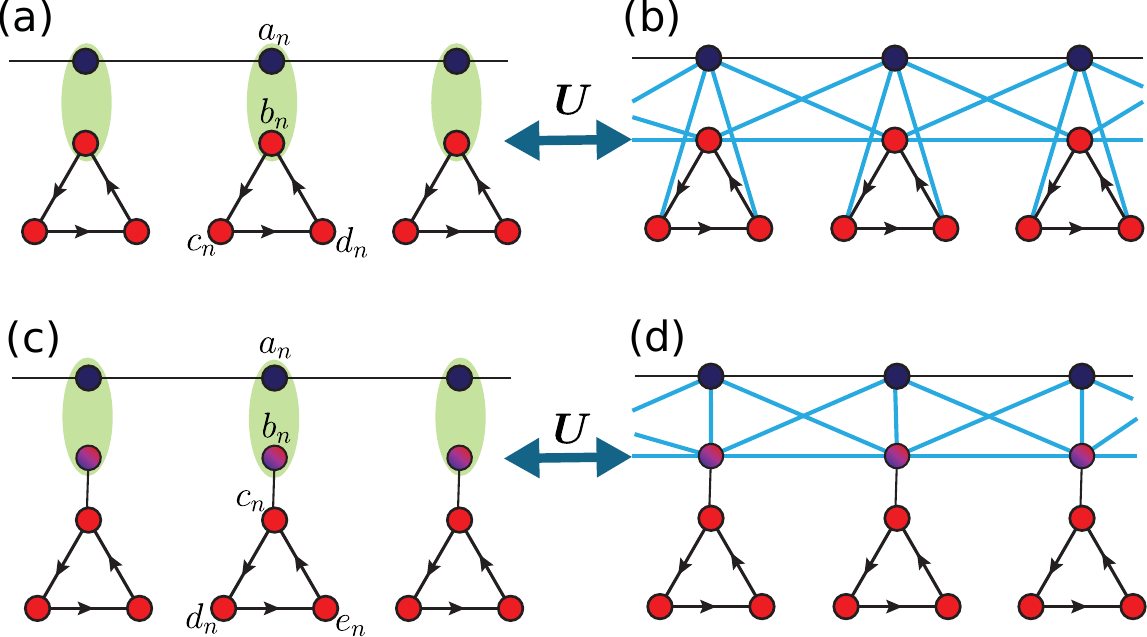}
    \caption{
        (a) 1D lattice formed by disconnected flux-threaded triangular plaquetttes adjacent to a one-dimensional dispersive lattice.
        Orange shadings mark the sites to which local unitary operation is applied, while the blue shadings indicate the flux-threaded plaquettes
        (b) Resulting lattice after the entangling rotations.
        (c,d) Same as (a,b) for triangular plaquettes with auxiliary sites. 
    }
    \label{fig:FB_ortho1}
\end{figure}

We shown an example in Fig.~\ref{fig:FB_ortho1}(a), which follows the simplest case in Sec.~\ref{sec:construction_ABF_1d}. 
In this case, we consider flux-threaded triangular plaquettes in Eq.~\eqref{eq:1d_trig}, shaded in blue color, which we arranged adjacently to a one dimensional dispersive chain. 
The intra-cell and inter-cell matrices $H_0$ and $H_1$ of this lattice in the detangled representation are
\begin{gather}
\begin{split}
    H_0 =
    \begin{pmatrix}
        0 & 0 & 0 & 0\\
        0 & 0 & e^{\imi \frac{\phi}{3} } & e^{-\imi \frac{\phi}{3}} \\
        0 & e^{-\imi \frac{\phi}{3}} & 0 & e^{\imi \frac{\phi}{3}} \\
        0 &  e^{\imi \frac{\phi}{3}} & e^{-\imi \frac{\phi}{3}} & 0 
    \end{pmatrix}
    \quad
    H_1 =  
    \begin{pmatrix}
        1 & 0 & 0 & 0\\
        0 & 0 & 0 & 0  \\
        0 & 0 & 0 & 0\\
        0 &  0 & 0 & 0
    \end{pmatrix}
\end{split}
\label{eq:Ham_eq_lin}
\end{gather}
We then design a unitary transformation $\bmu$ formed by local rotations \(U\!\in\!\gsun{2}\) in Eq.~\eqref{eq:SO2_U} defined between pairs of sites, as highlighted with green shared areas.
This yields an entangled lattice with four bands --- three of which are flat whose values are in Eq.~\eqref{eq:ABF_tr}, while the fourth one is dispersive at $E(k) = 2\cos k$.
The lattice is shown in Fig.~\ref{fig:FB_ortho1}(b), whose matrices are
\begin{align}
\begin{split}
    & H_0 =
    \begin{pmatrix}
        0 & 0 & e^{\imi \frac{\phi}{3} } \sin\theta & e^{-\imi \frac{\phi}{3} }  \sin\theta\\
        0 & 0 & e^{\imi \frac{\phi}{3} } \cos\theta  & e^{-\imi \frac{\phi}{3}} \\
        e^{-\imi \frac{\phi}{3} } \sin\theta & e^{-\imi \frac{\phi}{3}} \cos\theta & 0 & e^{\imi \frac{\phi}{3}} \\
        e^{\imi \frac{\phi}{3} } \sin\theta &  e^{\imi \frac{\phi}{3}} \cos\theta & e^{-\imi \frac{\phi}{3}} & 0 
    \end{pmatrix}
    \\
    &\qquad H_1 =  
    \begin{pmatrix}
        \cos^2\theta  & - \cos\theta\sin\theta & 0 & 0\\
        -\cos\theta\sin\theta & \sin^2\theta & 0 & 0  \\
        0 & 0 & 0 & 0\\
        0 &  0 & 0 & 0
    \end{pmatrix}
\end{split}
\label{eq:Ham_eq_lina}
\end{align}
%

The resulting intricate hopping profile of the generated lattice can be simplified, alike in Sec.~\ref{sec:construction_ABF_1d}, by extending each plaquette with an auxiliary site, as shown in Fig.~\ref{fig:FB_ortho1}(c).
The matrices of the SD lattice with the augmented plaquettes are 
\begin{gather}
\begin{split}
    H_0 =
    \begin{pmatrix}
        0 & 0 & 0 & 0 & 0\\
        0 & 0 & 1 & 0 & 0\\
        0 & 1 & 0 & e^{\imi \frac{\phi}{3} } & e^{-\imi \frac{\phi}{3}} \\
        0 & 0 & e^{-\imi \frac{\phi}{3}} & 0 & e^{\imi \frac{\phi}{3}} \\
        0 & 0 &  e^{\imi \frac{\phi}{3}} & e^{-\imi \frac{\phi}{3}} & 0 
    \end{pmatrix}
    \quad
    H_1 =  
    \begin{pmatrix}
        1 & 0 & 0 & 0 & 0\\
        0 & 0 & 0 & 0 & 0\\
        0 & 0 & 0 & 0 & 0\\
        0 & 0 & 0 & 0 & 0\\
        0 & 0 & 0 & 0 & 0
    \end{pmatrix}
\end{split}
\label{eq:Ham_eq_linb}
\end{gather}
The entangling process between the dispersive chain and the auxiliary sites, highlighted with green shared areas within each unit-cell, yields the entangled lattice shown in Fig.~\ref{fig:FB_ortho1}(d), whose matrices are
\begin{align}
\begin{split}
    H_0 &=
    \begin{pmatrix}
        0 & 0 & \sin\theta & 0 & 0 \\ 
        0 & 0 & \cos\theta & 0 & 0\\
        \sin\theta & \cos\theta & 0 & e^{\imi \frac{\phi}{3} }  & e^{-\imi \frac{\phi}{3}} \\
        0 & 0 & e^{-\imi \frac{\phi}{3}}  & 0 & e^{\imi \frac{\phi}{3}} \\
        0 & 0 & e^{\imi \frac{\phi}{3}} & e^{-\imi \frac{\phi}{3}} & 0 
    \end{pmatrix}
    \\
    H_1 &=  
    \begin{pmatrix}
        \cos^2\theta  & - \cos\theta\sin\theta & 0 & 0 & 0 \\
        -\cos\theta\sin\theta & \sin^2\theta & 0 & 0 & 0 \\
        0 & 0 & 0 & 0 & 0\\
        0 &  0 & 0 & 0 & 0 \\
        0 &  0 & 0 & 0 & 0 
    \end{pmatrix}
\end{split}
\label{eq:Ham_eq_linc}
\end{align}
In this case, the flux-threaded triangular plaquettes shaded in blue where the non-zero flux currents are located, remain unchanged. 
Let us observe that this construction can not only be extended by increasing the plaquette size and changing the unitary rotations $\bmu$, but also by considering a different dispersive lattice set adjacently to the plaquettes.

\subsection{Non-orthogonal CLS}
\label{sec:N_orthoCLS}

Flat bands supporting non-orthogonal CLS with broken TRS and non-zero flux are harder to come by, since systematic construction schemes are harder to come by (see however the recently developed approach using Bloch compact localized states~\cite{kim2025real}).
Indeed, unlike the previous ABF case discussed in Sec.~\ref{sec:construction} and orthogonal FB in Sec.~\ref{sec:construction_FB}, no local unitary rotations $\bmu$ can be exploited. 
However, this kind of flat bands exist and they can be constructed explicitly. 

\begin{figure}
    \centering
    \includegraphics[width=\columnwidth]{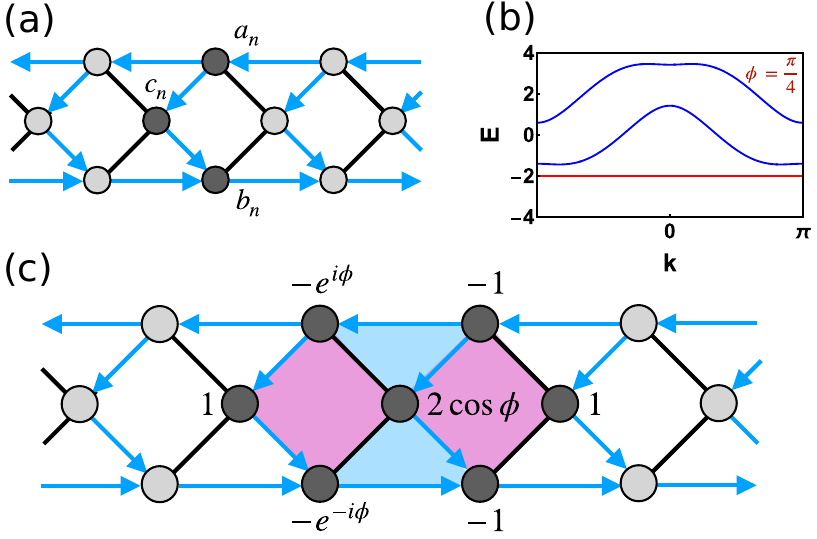}
    \caption{
        (a) One dimensional flat band lattice with four-site flux-threaded plaquette. 
        (b) Band structure for flux $\phi=\frac{\pi}{4}$.
        The $E=-2$ flat band is colored in red, while the two dispersive bands are colored in blue.  
        (c) CLS as function of the flux $\phi$. The blue shaded regions indicate where the current circulates.
    }
    \label{fig:FB_non_ortho1}
\end{figure}

An example is displayed in Fig.~\ref{fig:FB_non_ortho1}(a) and it is defined by intracell and intercell matrices
\begin{gather}
\begin{split}
    H_0 =
    \begin{pmatrix}
        0 & 0 & e^{\imi \phi} \\
        0 & 0 & e^{-\imi \phi} \\
        e^{-\imi \phi} & e^{\imi \phi} & 0 
    \end{pmatrix}
    ,  \quad
    H_1 = 
    \begin{pmatrix}
        e^{-\imi \phi}  & 0 & 1 \\
        0 & e^{\imi \phi}  & 1 \\
        0 & 0 & 0 
    \end{pmatrix}
\end{split}
\label{eq:Ham_eq_line}
\end{gather} 
This lattice, as detailed in Appendix~\ref{app:N_orthoCLS}, is obtained by breaking the chiral symmetry of a diamond chain with the horizontal hopping within the $a$ and the $c$ sub-lattices.
Then, the lattice's plaquettes are thread with carefully oriented magnetic fields, whose flux is controlled by $\phi$. 
This procedure renders one flat band at $E=- 2 $ and two dispersive bands $E_{2,3} =  1+2 \cos\phi \cos k  \pm  (1- \cos (2 k) - 2 \cos (2\phi) \sin^2 k)^{\frac{1}{2}} $ --- shown in Fig.~\ref{fig:FB_non_ortho1}(b) for $\phi=\frac{\pi}{4}$.  
The $\phi$-dependent CLS associated to the flat band is shown in Fig.~\ref{fig:FB_non_ortho1}(c).
The flux induces currents circulating within the plaquettes shaded in blue and pink colors, which respectively $\mathcal{J} = -6\sin 2\phi$ and  $\mathcal{J} = 4\sin 2\phi$.
The total current in the CLS is given by double the sum of these two currents and is non-zero.

\subsection{Chiral flat band lattices with broken TRS}


An important class of flat-band systems is formed by chiral FBs, where the existence of the flat band is guaranteed by the chiral (sublattice) symmetry of the lattice.
This class has been explored extensively in Refs.~\cite{vidal1998aharonov,vidal2000interaction,vidal2001disorder,khomeriki2016landau,ramachandran2017chiral}.
In the paradigmatic example of Vidal \emph{et al.}~\cite{vidal1998aharonov}, complex hoppings on the diamond lattice generate Hamiltonians in which TRS is broken.
Furthermore, the CLS at the flat-band energy is non-orthogonal for a general flux threaded through the lattice~\citet{khomeriki2016landau}.

In such systems, the CLS has non-zero amplitude only on the majority sublattices $(a_n,b_n)$ and fails to occupy all sites of any flux-threaded plaquette.
As a result, no gauge-invariant current loop can be formed, and each plaquette contributes zero net current.
In the remaining cases where the CLS does occupy all sites of a flux-threaded polygon, either
(i) its amplitudes are entirely real---yielding no current; or 
(ii) the corresponding flux configuration preserves TRS, placing it outside the scope of the present work.

Thus, chiral flat bands with broken TRS differ fundamentally from both the ABF constructions and the more general flat-band architectures developed here: they cannot support compact current vortices. 

\section{Discussion and Conclusion}
\label{sec:discussion}

We have provided a protocol of generating all-bands-flat lattices with broken time reversal symmetry.
We introduced the TRS breaking by threading magnetic flux through polygonal plaquettes ($M$ sites).
Although we focused our attention on triangle and square plaquettes, our protocol can be trivially extended to any polygonal plaquettes.
Besides, we show that just selecting the plaquettes and using entangling unitaries, we can
(a) preserve the broken time reversal symmetry, and
(b) transform the semi-detangled lattice to non-detangled lattice;
we end up with ABF lattices with non-trivial hopping and complicated network.
Therefore, we introduce appendices to each of the vertices of the the plaquettes which allow us to braid an ABF lattice in 1D (Fig.~\ref{fig:1D_ABF}), 2D (Figs.~\ref{fig:2D_ABF}, \ref{fig:ABF}), and 3D (Fig.~\ref{fig:3D_ABF}) as well as leave the flux threaded plaquettes unspoiled.
However, the introduction of appendices increase the number of bands twofold, \emph{i.e.,} $2M$, with the benefit of easier extendability to higher dimensions.
We can stack $L = 2,3, ... $ augmented plaquettes allowing us to reduce the complex hopping junctions to rhombic hopping. 
This results in families of physically relevant ABF lattices with broken TRS with $2 M L$ bands.
The CLS in these systems occupy the initial plaquette and auxiliary sites, along with the entangled sites from neighboring unit cells involved in the unitary transformation $\bmu$.
The CLS components residing in the flux-threaded plaquettes generate nonzero local currents, effectively behaving as localized lattice vortices.


We further show that our construction is also implementable to systems with flat-band in presence of dispersive bands.
We consider the cases of orthogonal (Fig.~\ref{fig:FB_ortho1}), and non-orthogonal (Fig.~\ref{fig:FB_non_ortho1}) CLSs.
We find our broken TRS systems contain localized circulation, however, unlike ABF or orthogonal cases, the non-orthogonal case has current in all the plaquettes as seen from the diamond lattice(Fig.~\ref{fig:FB_non_ortho1}(c)). 

Our work opens up the possibility of studying localized currents in otherwise dispersionless lattice.
Since these currents are dependent on the strength of the flux threaded through the plaquettes, it allows tunability, which can be of interest to the experimentalists of the subject. 
It can be shown, if the boundary conditions are disturbed, \emph{i.e.,} edges are cut in a particular direction, \emph{e.g.,} $x$-direction of the lattice in Fig.~\ref{fig:ABF}, we can get extended edge states.
Unlike conventional topological edge states, which arise due to nontrivial bulk topology, the edge-localized modes observed here emerge as a result of structural modifications to an inherently dispersionless bulk.
While the models studied here do not exhibit nontrivial Chern numbers, future work could explore whether additional topological features can emerge under different symmetry-breaking conditions.

\section*{Acknowledgment}
    RKR, AA, and SF acknowledge the support by the Institute for Basic Science in Korea (IBS-R024-D1).
    CD acknowledges the support  the PNRR MUR project CN 00000013-ICSC. 
    We thank Tilen \v{C}ade\v{z}, Dung Xuan Nguyen, and Yeongjun Kim for valuable discussions.

\appendix

\section{TRS versus unitary transformations}
\label{appA:TRS}

We consider a unitary operator $\bmu \in \Hil$ such that
\begin{gather}
    \label{eqA:hamilton_transform}
    H^\prime := \bmu H \bmu^\dagger.
\end{gather}
Using the definition in Eq.~\eqref{eq:trs_def} in the main text, we want to explore the conditions which will lead to $H^\prime$ be TRS, \emph{i.e.,}
\begin{gather}
    \label{eqA:hprime_trs}
    \mathcal{T} H^\prime \mathcal{T}^{-1} = H^\prime.
\end{gather}

Let us revisit the properties of anti-unitary transformations.
Consider the following lemma ---

\begin{lemma}
Let $\mathcal{T}$ be an anti-unitary operator and $A$ a bounded linear operator on $\Hil$.
Then for any $\ket{\phi}, \ket{\psi} \in \Hil$,
\begin{gather}
    \label{eqA:antiunitary-action}
    \qexp{\mathcal{T} \psi}{\mathcal{T} A \mathcal{T}^{-1}}{\mathcal{T} \phi} = \left(\qexp{\phi}{A}{ \psi}\right)^\ast.
\end{gather}
\end{lemma}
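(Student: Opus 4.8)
The plan is to prove the identity $\qexp{\mathcal{T}\psi}{\mathcal{T}A\mathcal{T}^{-1}}{\mathcal{T}\phi} = \left(\qexp{\phi}{A}{\psi}\right)^\ast$ by peeling off the two antiunitary factors one at a time, using only the defining property of an antiunitary operator: $\mathcal{T}$ is a bijective antilinear map with $\braket{\mathcal{T}x}{\mathcal{T}y} = \braket{y}{x}$ for all $\ket{x},\ket{y}\in\Hil$ (conjugate-linear in the appropriate slot, norm-preserving). I would also record the immediate consequence that $\mathcal{T}^{-1}$ is itself antiunitary, so $\braket{\mathcal{T}^{-1}x}{\mathcal{T}^{-1}y} = \braket{y}{x}$ as well; this is what lets me handle the $\mathcal{T}^{-1}$ sitting inside the middle operator.

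First I would set $\ket{\chi} := A\mathcal{T}^{-1}\ket{\phi}$, so that the left-hand side reads $\braket{\mathcal{T}\psi}{\mathcal{T}\chi}$. Applying the antiunitary inner-product rule directly gives $\braket{\mathcal{T}\psi}{\mathcal{T}\chi} = \braket{\chi}{\psi} = \braket{A\mathcal{T}^{-1}\phi}{\psi}$. Next I would move $A$ to the other side of the inner product: $\braket{A\mathcal{T}^{-1}\phi}{\psi} = \braket{\mathcal{T}^{-1}\phi}{A^\dagger\psi}$ — here "bounded linear operator" is exactly the hypothesis needed to guarantee $A$ has an adjoint $A^\dagger$ and that $A^\dagger$ is again bounded and linear, so the manipulation is legitimate. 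Finally I would strip the remaining $\mathcal{T}^{-1}$: writing $\ket{\xi} := A^\dagger\ket{\psi}$, I have $\braket{\mathcal{T}^{-1}\phi}{\xi} = \braket{\mathcal{T}^{-1}\phi}{\mathcal{T}^{-1}\mathcal{T}\xi}$, and applying the antiunitary rule for $\mathcal{T}^{-1}$ this equals $\braket{\mathcal{T}\xi}{\phi} = \overline{\braket{\phi}{\mathcal{T}\xi}} = \overline{\qexp{\phi}{\mathcal{T}A^\dagger\,}{\psi}}$. Hmm — this leaves a $\mathcal{T}A^\dagger$ rather than a clean $A$, so I would instead organize the bookkeeping the other way: start from the right-hand side. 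Compute $\overline{\qexp{\phi}{A}{\psi}} = \overline{\braket{\phi}{A\psi}} = \braket{A\psi}{\phi}$; insert $\mathcal{T}^{-1}\mathcal{T}$ and use antiunitarity of $\mathcal{T}$ to get $\braket{A\psi}{\phi} = \braket{\mathcal{T}^{-1}\mathcal{T}A\psi}{\phi}$; then, since $\braket{\mathcal{T}^{-1}u}{v} = \overline{\braket{u}{\mathcal{T}^{-1}v}}$... the conjugates start stacking up. The cleanest route is to verify the claimed equality by reducing it, via three applications of $\braket{\mathcal{T}x}{\mathcal{T}y}=\braket{y}{x}$ and one adjoint move, to the tautology $\braket{\mathcal{T}^{-1}\phi}{A^\dagger\psi} = \braket{\mathcal{T}^{-1}\phi}{A^\dagger\psi}$; I would present the chain left-to-right as above, being careful that the single adjoint step $\braket{A u}{v}=\braket{u}{A^\dagger v}$ is the only place linearity/boundedness of $A$ enters, and that every $\mathcal{T}$-step is an exact use of the antiunitary inner-product identity (no stray conjugates, because each antiunitary factor flips the conjugation exactly once and they are applied to both sides of an inner product simultaneously).

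The one subtlety — and the only place a careful reader might object — is the treatment of $\mathcal{T}^{-1}$ inside $\mathcal{T}A\mathcal{T}^{-1}$: one must know that $\mathcal{T}^{-1}$ exists and is antiunitary. I would dispatch this in a single sentence at the start: $\mathcal{T}$ antiunitary means in particular bijective, so $\mathcal{T}^{-1}$ exists; and from $\braket{\mathcal{T}x}{\mathcal{T}y}=\braket{y}{x}$, substituting $x=\mathcal{T}^{-1}u$, $y=\mathcal{T}^{-1}v$ yields $\braket{u}{v}=\braket{\mathcal{T}^{-1}v}{\mathcal{T}^{-1}u}$, i.e.\ $\mathcal{T}^{-1}$ satisfies the same identity. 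With that lemma-within-the-lemma in hand the computation is three lines of inner-product algebra. I do not expect any real obstacle; the "hard part" is purely presentational — keeping the complex conjugation bookkeeping straight so that exactly one conjugate survives on the right-hand side, which is automatic if one resists the temptation to conjugate prematurely and instead applies the antiunitary identity in its exact stated form $\braket{\mathcal{T}x}{\mathcal{T}y}=\braket{y}{x}$ (no bar, because the bar is already absorbed into the swap of the two arguments).
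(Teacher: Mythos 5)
Your toolkit is the right one---the antiunitary inner-product identity $\braket{\mathcal{T}x}{\mathcal{T}y}=\braket{y}{x}$, the observation that $\mathcal{T}^{-1}$ is again antiunitary, and the existence of $A^\dagger$---but the computation as you lay it out contains a transcription error at the very first step, and that error is the source of every difficulty you subsequently run into. The ket on the left-hand side is $\ket{\mathcal{T}\phi}$, not $\ket{\phi}$, so the middle operator acts as $(\mathcal{T}A\mathcal{T}^{-1})\ket{\mathcal{T}\phi}=\mathcal{T}A\mathcal{T}^{-1}\mathcal{T}\ket{\phi}=\mathcal{T}\ket{A\phi}$: the interior $\mathcal{T}^{-1}$ simply cancels against the $\mathcal{T}$ already sitting in the ket. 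The correct auxiliary vector is therefore $\ket{\chi}=A\ket{\phi}$, not $\ket{\chi}=A\mathcal{T}^{-1}\ket{\phi}$. With your choice you are in fact computing $\qexp{\mathcal{T}\psi}{\mathcal{T}A\mathcal{T}^{-1}}{\phi}$, which is why an unpaired antiunitary factor survives to the end and you are left staring at $\left(\qexp{\phi}{\mathcal{T}A^\dagger}{\psi}\right)^\ast$---that stray $\mathcal{T}$ is not a bookkeeping nuisance, it is the missing factor from the ket. The rest of the proposal does not repair this: you note that the forward chain fails, note that the backward chain ``stacks up conjugates,'' and then assert that ``the cleanest route'' reduces everything to a tautology without ever exhibiting that chain. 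As written, this is not a complete proof.

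Once the ket is transcribed correctly the argument collapses to a single line, which is exactly the paper's proof: $\qexp{\mathcal{T}\psi}{\mathcal{T}A\mathcal{T}^{-1}}{\mathcal{T}\phi}=\braket{\mathcal{T}\psi}{\mathcal{T}(A\phi)}=\braket{A\phi}{\psi}=\left(\braket{\psi}{A\phi}\right)^\ast$. Neither the adjoint $A^\dagger$ nor the antiunitarity of $\mathcal{T}^{-1}$ is needed (your side remark that $\mathcal{T}^{-1}$ is antiunitary is correct and worth a sentence, but it is not load-bearing here). Note also that this computation lands on $\left(\qexp{\psi}{A}{\phi}\right)^\ast$, i.e.\ with $\psi$ and $\phi$ in the opposite order from the displayed right-hand side; the two coincide only for complex-symmetric $A$. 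For the Hermitian Hamiltonians to which the lemma is applied in Theorem~\ref{prop:TRS} the distinction is immaterial, but a careful write-up should either relabel the states on the right-hand side or flag this---your derivation never gets far enough to encounter the issue.
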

\begin{proof}
    Let $A^\prime = \mathcal{T}A\mathcal{T}^{{-}1}$, then, 
    \begin{gather}
        \qexp{\mathcal{T}\psi}{A^\prime}{\mathcal{T}\phi} = \qexp{\mathcal{T}\psi}{\mathcal{T}A}{\phi} = \left(\qexp{\phi}{A}{\psi}\right)^\ast,
    \end{gather}
    where in the last step we have used the anti-linearity and anti-unitarity properties of $\mathcal{T}$.
\end{proof}
From this lemma, we note the following:
\begin{gather}
    \left(\mathcal{T}A\mathcal{T}^{{-}1}\right)^\dagger = \mathcal{T}A^\dagger\mathcal{T}^{{-}1}.
\end{gather}

We now state the main result of this appendix.

\begin{theorem}
\label{prop:TRS}
Let $H$ be TRS, i.e., Eq.~\eqref{eq:trs_def} holds, and define $H^\prime$ via Eq.~\eqref{eqA:hamilton_transform}.
Then $H^\prime$ is TRS if and only if
\begin{gather}
    \label{eqA:trs-preserve}
    \comm{H}{\mathcal{W}} = 0, \quad \mathcal{W}:= \bmu^\dagger(\mathcal{T}\bmu\mathcal{T}^{-1}).
\end{gather}
\end{theorem}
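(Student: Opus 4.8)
The plan is to prove the equivalence by direct manipulation, starting from the defining relation $\mathcal{T}H'\mathcal{T}^{-1}=H'$ and rewriting both sides using $H'=\bmu H\bmu^\dagger$ together with the hypothesis that $H$ itself is TRS. First I would expand the left-hand side: $\mathcal{T}H'\mathcal{T}^{-1}=\mathcal{T}\bmu H\bmu^\dagger\mathcal{T}^{-1}=(\mathcal{T}\bmu\mathcal{T}^{-1})(\mathcal{T}H\mathcal{T}^{-1})(\mathcal{T}\bmu^\dagger\mathcal{T}^{-1})$, inserting $\mathcal{T}^{-1}\mathcal{T}=\mathbb{I}$ between factors. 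Using the TRS of $H$, the middle factor is just $H$, and using the identity $(\mathcal{T}A\mathcal{T}^{-1})^\dagger=\mathcal{T}A^\dagger\mathcal{T}^{-1}$ established right after the lemma, the third factor is $(\mathcal{T}\bmu\mathcal{T}^{-1})^\dagger$. So the TRS condition for $H'$ becomes $(\mathcal{T}\bmu\mathcal{T}^{-1})H(\mathcal{T}\bmu\mathcal{T}^{-1})^\dagger=\bmu H\bmu^\dagger$.

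Next I would massage this into a statement about $\mathcal{W}=\bmu^\dagger(\mathcal{T}\bmu\mathcal{T}^{-1})$. Multiplying the last equation on the left by $\bmu^\dagger$ and on the right by $\bmu$ gives $\mathcal{W}H\big(\bmu^\dagger(\mathcal{T}\bmu\mathcal{T}^{-1})\big)^\dagger$... more cleanly: note $(\mathcal{T}\bmu\mathcal{T}^{-1})^\dagger\bmu=\big(\bmu^\dagger(\mathcal{T}\bmu\mathcal{T}^{-1})\big)^\dagger=\mathcal{W}^\dagger$, and $\bmu^\dagger(\mathcal{T}\bmu\mathcal{T}^{-1})=\mathcal{W}$, so left-multiplying by $\bmu^\dagger$ and right-multiplying by $\bmu$ turns the equation into $\mathcal{W}H\mathcal{W}^\dagger=H$. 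Since $\bmu$ is unitary and $\mathcal{T}$ is anti-unitary, $\mathcal{T}\bmu\mathcal{T}^{-1}$ is unitary, hence $\mathcal{W}$ is unitary and $\mathcal{W}^\dagger=\mathcal{W}^{-1}$. Therefore $\mathcal{W}H\mathcal{W}^{-1}=H$, which is precisely $\comm{H}{\mathcal{W}}=0$. Every implication in this chain is reversible — each step was an equivalence (conjugating by invertible operators, using $H$'s TRS) — so the "if" direction follows by reading the argument backwards.

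I would present it as a short forward-and-backward computation rather than splitting into two separate implications, since the manipulations are all bi-directional. The one point that needs a word of care — and the only real obstacle — is the unitarity of $\mathcal{W}$, which is what lets me pass from $\mathcal{W}H\mathcal{W}^\dagger=H$ to the commutator form $\comm{H}{\mathcal{W}}=0$; without it the two statements are not equivalent. This follows because $\mathcal{K}\bmu\mathcal{K}^{-1}=\bmu^\ast$ for the standard conjugation (and more generally $\mathcal{T}\bmu\mathcal{T}^{-1}=\mathcal{U}\bmu^\ast\mathcal{U}^\dagger$ is unitary as a product of unitaries), so $\mathcal{W}$ is a product of two unitaries. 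I would also remark that the identity $(\mathcal{T}A\mathcal{T}^{-1})^\dagger=\mathcal{T}A^\dagger\mathcal{T}^{-1}$ is exactly the corollary drawn from the preceding Lemma, so no new machinery is required. A final sanity check: when $\bmu$ is real ($\bmu=\bmu^\ast$, spinless case) one gets $\mathcal{W}=\bmu^\dagger\bmu=\mathbb{I}$, which trivially commutes with $H$, recovering the well-known fact that orthogonal transformations preserve TRS.
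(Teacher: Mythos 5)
Your proposal is correct and follows essentially the same route as the paper's proof: expand $\mathcal{T}H^\prime\mathcal{T}^{-1}$ using the TRS of $H$ and the identity $(\mathcal{T}A\mathcal{T}^{-1})^\dagger=\mathcal{T}A^\dagger\mathcal{T}^{-1}$, reduce the condition to $\mathcal{W}H\mathcal{W}^\dagger=H$, and pass to $\comm{H}{\mathcal{W}}=0$ via the unitarity of $\mathcal{W}$. Your explicit remark that the unitarity of $\mathcal{W}$ is what licenses the final step is a point the paper leaves implicit, and is a welcome addition rather than a deviation.
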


\begin{proof}
Considering $H$ with TRS, we write:

\begin{align}
\begin{split}
    \mathcal{T} H^\prime \mathcal{T}^{-1}
    &= \mathcal{T} (\bmu H \bmu^\dagger) \mathcal{T}^{-1} \\
    &= (\mathcal{T} \bmu \mathcal{T}^{-1})(\mathcal{T} H \mathcal{T}^{-1})(\mathcal{T} \bmu^\dagger \mathcal{T}^{-1}) \\
    &= (\mathcal{T} \bmu \mathcal{T}^{-1}) H (\mathcal{T} \bmu \mathcal{T}^{-1})^\dagger.
\end{split}
\end{align}

Thus, $\mathcal{T} H^\prime \mathcal{T}^{-1} = H^\prime$ if and only if (using Eq.~\eqref{eqA:hamilton_transform} on the rhs)
\begin{gather}
    \label{eqA:trs_first}
    (\mathcal{T} \bmu \mathcal{T}^{-1}) H (\mathcal{T} \bmu \mathcal{T}^{-1})^\dagger = \bmu H \bmu^\dagger.
\end{gather}

Now, we define $\mathcal{W}:= \bmu^\dagger(\mathcal{T}\bmu\mathcal{T}^{-1})$. Using this we can write Eq.~\eqref{eqA:trs_first} as 
\begin{gather}
\begin{split}
    \mathcal{W}H\mathcal{W}^\dagger &= H,\\
    \mathcal{W}H &= H\mathcal{W},
\end{split}
\end{gather}
which implies
\begin{gather}
    \comm{H}{\mathcal{W}} = 0.
\end{gather}
\end{proof}

\begin{corollary}
    It follows from Theorem~\ref{prop:TRS}, a given unitary transformation $\bmu$ preserves the TRS of Hamiltonian $H$ if it is itself TRS preserving, \emph{i.e.,}
    \begin{gather}
    \label{eqA:trs_simple}
        \mathcal{T}\bmu\mathcal{T}^{-1} = \bmu.
    \end{gather}
\end{corollary}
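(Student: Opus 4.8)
The plan is to derive the corollary as an immediate specialization of Theorem~\ref{prop:TRS}. The theorem states that $H'=\bmu H\bmu^\dagger$ is TRS if and only if $\comm{H}{\mathcal{W}}=0$, where $\mathcal{W}=\bmu^\dagger(\mathcal{T}\bmu\mathcal{T}^{-1})$. So it suffices to show that if $\bmu$ is itself TRS preserving in the sense of Eq.~\eqref{eqA:trs_simple}, namely $\mathcal{T}\bmu\mathcal{T}^{-1}=\bmu$, then $\mathcal{W}$ collapses to the identity, which trivially commutes with any $H$.

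First I would substitute $\mathcal{T}\bmu\mathcal{T}^{-1}=\bmu$ into the definition of $\mathcal{W}$:
\begin{gather}
    \mathcal{W} = \bmu^\dagger(\mathcal{T}\bmu\mathcal{T}^{-1}) = \bmu^\dagger\bmu = \mathbb{I}.
\end{gather}
Then $\comm{H}{\mathcal{W}} = \comm{H}{\mathbb{I}} = 0$ holds identically, so by Theorem~\ref{prop:TRS} the transformed Hamiltonian $H'$ is TRS. This completes the argument: a TRS-preserving unitary maps any TRS Hamiltonian to a TRS Hamiltonian.

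There is essentially no obstacle here — the corollary is a one-line consequence of the theorem once one recognizes that the hypothesis forces $\mathcal{W}=\mathbb{I}$. The only point requiring a moment's care is the unitarity identity $\bmu^\dagger\bmu=\mathbb{I}$, which is part of the standing assumption that $\bmu$ is unitary (stated at the top of Appendix~\ref{appA:TRS}). One could optionally remark that this sufficient condition is not necessary: $\mathcal{W}=\mathbb{I}$ is stronger than $\comm{H}{\mathcal{W}}=0$, so there may exist $\bmu$ with $\mathcal{T}\bmu\mathcal{T}^{-1}\neq\bmu$ that still preserve TRS of a \emph{particular} $H$ — but that observation is not needed for the proof of the corollary as stated.
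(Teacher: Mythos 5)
Your proof is correct and follows essentially the same route as the paper's: the hypothesis $\mathcal{T}\bmu\mathcal{T}^{-1}=\bmu$ forces $\mathcal{W}=\bmu^\dagger\bmu=\mathbb{I}$, which trivially commutes with $H$, so Theorem~\ref{prop:TRS} applies. If anything, your write-up states the implication in the cleaner logical direction than the paper's own one-line proof, and your closing remark that the condition is sufficient but not necessary is accurate.
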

\begin{proof}
    Follows from the main result of Theorem~\ref{prop:TRS}, Eq.~\eqref{eqA:trs-preserve}. The commutation is trivially satisfied if $\mathcal{W}= \mathrm{I}$ which immediately implies 
    $\mathcal{T}\bmu\mathcal{T}^{-1} = \bmu$.
\end{proof}



\subsection{Spinless Case and orthogonal transformations}

For spinless systems, the time-reversal operator is simply complex conjugation: $\mathcal{T} = K$.
In this case,
\[
    \mathcal{T} \bmu \mathcal{T}^{-1} = \mathcal{K} \bmu \mathcal{K} = \bmu^\ast,
\]
so the condition~\eqref{eqA:trs_simple} becomes:
\begin{gather}
    \bmu = \bmu^\ast, 
    \label{eq:real-unitary}
\end{gather}
implying $\bmu$ to be a real unitary matrix, i.e. orthogonal matrix.
Therefore, in spinless systems, TRS is preserved under orthogonal transformations.
The overall conclusion is that time-reversal symmetry (TRS) is unaffected by orthogonal transformations.


\section{Construction of the unitaries}
\label{appA:unitary}

We provide a general protocol for the construction of unitaries that introduce local rotations between the sites of different unit cells.
We consider the local operations between two-sites and three-sites which can be extended to higher number of such sites.

\subsection{Between two sites}
\label{appA:unitary_su2}

Since the local operations only introduce rotation between two sites (of neighboring unit-cells), we consider an element of \(\gsun{2}\):
\begin{gather}
    U = 
    \begin{pmatrix}
        z & w \\
        -w^\ast & z^\ast
    \end{pmatrix} 
\end{gather} 
parametrize by two complex numbers $z$ and $w$ such that $|z|^2 + |w|^2 =1$. 
We construct \( \bm{U}_v \) along the \( v \)-direction following a systematic protocol. 
Consider a lattice with \( N_x \) unit cells in the \( x \)-direction and \( N_y \) unit cells in the \( y \)-direction, where each unit cell contains \( \nu \) sites.
We initiate \( \bm{U}_v \) by setting it equal to the identity matrix of size \( (\nu N_x N_y) \times (\nu N_x N_y) \).

\begin{enumerate}
    \item For each pair of sites \( ( i, j ) \) located in adjacent unit cells \(\langle\bm{\mu},\bm{\mu}'\rangle\) along the \( v \)-direction:
    \begin{enumerate}
        \item Diagonal Elements:
        \begin{gather}
            \label{eqA:diagonal}
            [\bm{U}_v]_{\bm{\mu},i;\bm{\mu},i} = [U]_{11}, \quad [\bm{U}_v]_{\bm{\mu}',j;\bm{\mu}',j} = [U]_{22}.
        \end{gather}
        \item Off-Diagonal Elements:
        \begin{gather}
            \label{eqA:off_diagonal}
            [\bm{U}_v]_{\bm{\mu}',j;\bm{\mu},i} = [U]_{21}, \quad [\bm{U}_v]_{\bm{\mu},i;\bm{\mu}',j} = [U]_{12}.
        \end{gather}
    \end{enumerate}
    \item For periodic boundary conditions, we apply the same substitutions to unit-cell pairs at both ends along the \( v \)-direction to ensure continuity:
    \begin{enumerate}
        \item Diagonal Elements:
        \begin{gather}
            \label{eqA:diagonal_pbc}
            [\bm{U}_v]_{\bm{\mu},i;\bm{\mu},i} = [U]_{11}, \quad [\bm{U}_v]_{\bm{\mu}',j;\bm{\mu}',j} = [U]_{22}.
        \end{gather}
        \item Off-Diagonal Elements:
        \begin{gather}
            \label{eqA:off_diagonal_pbc}
            [\bm{U}_v]_{\bm{\mu}',j;\bm{\mu},i} = [U]_{21}, \quad [\bm{U}_v]_{\bm{\mu},i;\bm{\mu}',j} = [U]_{12}.
        \end{gather}
    \end{enumerate}
\end{enumerate}

The process is repeated for each direction \( v \) as follows:
\begin{enumerate}
    \item For \( v = x \)-direction (\(\bm{\mu} = m,n;~ \bm{\mu}'= m',n\)):
    \begin{enumerate}
        \item Step 1 (Eqs. \eqref{eqA:diagonal} – \eqref{eqA:off_diagonal}) is repeated \( N_x - 1 \) times for each fixed \( n = 1, \dots, N_y \).
        \item Step 2 (Eqs. \eqref{eqA:diagonal_pbc} – \eqref{eqA:off_diagonal_pbc}) is applied between unit cells at \( m=1 \) and \( m^\prime=N_x \), ensuring periodicity (\(\bm{\mu}' = N_x,n;~ \bm{\mu}= 1,n\)).
    \end{enumerate}
    \item For \( v = y \)-direction (\(\bm{\mu} = m,n;~ \bm{\mu}'= m,n'\)):
    \begin{enumerate}
        \item Step 1 (Eqs. \eqref{eqA:diagonal} – \eqref{eqA:off_diagonal}) is repeated \( N_y - 1 \) times for each fixed \( m = 1, \dots, N_x \).
        \item Step 2 (Eqs. \eqref{eqA:diagonal_pbc} – \eqref{eqA:off_diagonal_pbc}) is applied between unit cells at \( n=1 \) and \( n^\prime=N_y \), ensuring periodicity (\(\bm{\mu}' = m,N_y;~ \bm{\mu}= m,1\)).
    \end{enumerate}
\end{enumerate}

\subsection{Between three sites}
\label{appA:unitary_su3}

The local operations will be formed by an element of \(\gsun{3}\).
However, there are 8 real parameters (\( \alpha_i\) for \( i = 1 \cdots 8\)) that can explicitly describe \(\gsun{3}\), it can be written as follows.
Using \( 3\times 3\) Gell-Mann matrices \cite{gell-mann_2000_eightfold} \( \lambda_i \) for \( i = 1\cdots 8\), the general unitary can be written as:
\begin{gather}
    U = \exp(\imi \sum_{i=1}^8 \alpha_i \lambda_i). 
\end{gather}
However, for the demonstration of our construction, we use a real \(\mathrm{SO}(3)\) (subgroup of \(\gsun{3}\)) matrix, parametrized by three Euler angles \((\theta, \phi, \psi)\) as:
\begin{gather}
    U = {\footnotesize
    \mqty[
        cp\,ct\,cps-sp\,sps & -cp\,ct\,sps-sp\,cps & cp\,st\\
        sp\,ct\,cps+cp\,sps & -sp\,ct\,sps+cp\,cps & sp\,st\\
        -st\,cps & st\,sps & ct
    ]},
\end{gather}
where we have used the shorthands of the form \(\#1\#2\) where in \(\#1\) we place \( c\) for \(\cos\), \( s\) for \( \sin\); and in \(\#2\) \( p, t, ps\) for \(\phi, \theta,\) and \(\psi\), respectively.
Now that we have given an example of the kind of unitaries we need, let us look at how to construct \(\bm{U}_v\). For sites \(i, j,\) and \(k\) of unit-cells \( \langle \bm{\mu}, \bm{\mu}^\prime, \bm{\mu}^{\dprime}\rangle\), we do the following, 

\begin{enumerate}
    \item Diagonal Elements:
    \begin{align}
    \label{eqA:diagonal_su3}
    \begin{split}
        [\bm{U}_v]_{\bm{\mu},i;\bm{\mu},i} = [U]_{11}, \quad& [\bm{U}_v]_{\bm{\mu}',j;\bm{\mu}',j} = [U]_{22}, \\ [\bm{U}_v]_{\bm{\mu}^{\dprime},k;\bm{\mu}^{\dprime},k} = [U]_{33} &.
    \end{split}
    \end{align}
    \item Off-Diagonal Elements:
    \begin{align}
    \label{eqA:off_diagonal_su3}
    \begin{split}
        [\bm{U}_v]_{\bm{\mu},i;\bm{\mu}^\prime,j} = [U]_{12}, \quad& [\bm{U}_v]_{\bm{\mu},i;\bm{\mu}^{\dprime},k} = [U]_{13},\\
        [\bm{U}_v]_{\bm{\mu}^\prime,j;\bm{\mu}^{\dprime},k} = [U]_{23}, \quad& [\bm{U}_v]_{\bm{\mu}^\prime,j;\bm{\mu},i} = [U]_{21},\\
        [\bm{U}_v]_{\bm{\mu}^{\dprime},k;\bm{\mu},i} = [U]_{31}, \quad& [\bm{U}_v]_{\bm{\mu}^{\dprime},k;\bm{\mu}^{\prime},j} = [U]_{32}.
    \end{split}
    \end{align}
\end{enumerate}
For periodic boundary conditions, we apply the same substitutions to unit-cell pairs at both ends along the \( v \)-direction to ensure continuity:
\begin{enumerate}
    \item Diagonal Elements:
    \begin{align}
    \label{eqA:diagonal_pbc_su3}
    \begin{split}
        [\bm{U}_v]_{\bm{\mu},i;\bm{\mu},i} = [U]_{11}, \quad& [\bm{U}_v]_{\bm{\mu}',j;\bm{\mu}',j} = [U]_{22}, \\ [\bm{U}_v]_{\bm{\mu}^{\dprime},k;\bm{\mu}^{\dprime},k} = [U]_{33} &.
    \end{split}
    \end{align}
    \item Off-Diagonal Elements:
    \begin{align}
    \label{eqA:off_diagonal_pbc_su3}
    \begin{split}
        [\bm{U}_v]_{\bm{\mu},i;\bm{\mu}^\prime,j} = [U]_{12}, \quad& [\bm{U}_v]_{\bm{\mu},i;\bm{\mu}^{\dprime},k} = [U]_{13},\\
        [\bm{U}_v]_{\bm{\mu}^\prime,j;\bm{\mu}^{\dprime},k} = [U]_{23}, \quad& [\bm{U}_v]_{\bm{\mu}^\prime,j;\bm{\mu},i} = [U]_{21},\\
        [\bm{U}_v]_{\bm{\mu}^{\dprime},k;\bm{\mu},i} = [U]_{31}, \quad& [\bm{U}_v]_{\bm{\mu}^{\dprime},k;\bm{\mu}^{\prime},j} = [U]_{32}.
    \end{split}
    \end{align}
\end{enumerate}

\section{Construction of non-orthogonal flat bands with broken time reversal symmetry}
\label{app:N_orthoCLS}

In this section, we detail the construction of the flat band lattice with non-orthogonal CLS and broken time-reversal. 
\begin{figure}
    \centering
    \includegraphics[width=\columnwidth]{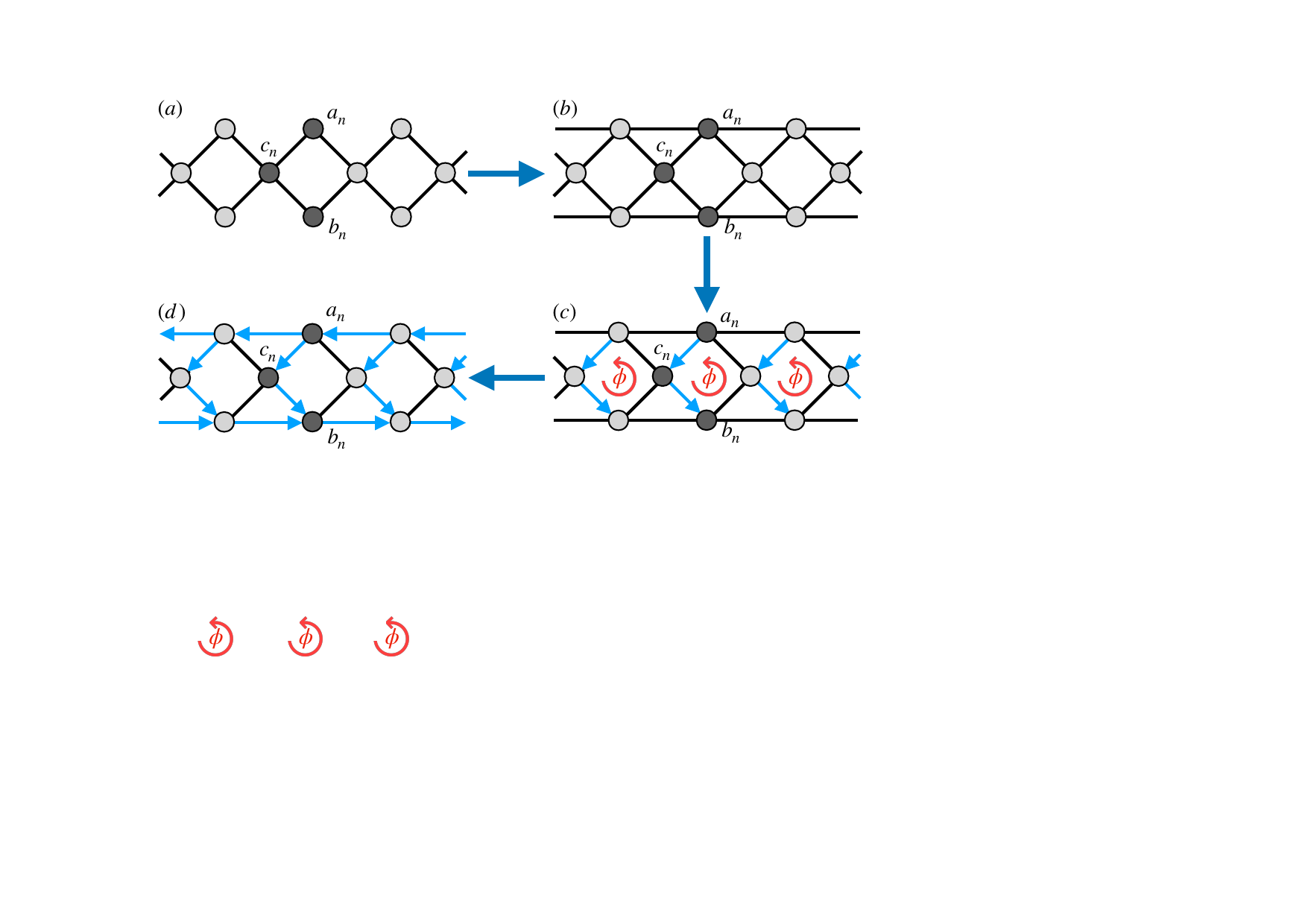}
    \caption{
        (a) Three bands diamond chain.
        The unit-cell sites are colored in black.  
        (b) Diamond chain with a chirality-breaking hopping connecting the nearest-neighboring $a$-sites.  
        (c) Diamond chain with flux threading the central rhombic plaquettes. 
        (d) Diamond chain with flux threading the central rhombic plaquettes and the upper and lower triangular plaquettes. 
    }
    \label{fig:FB_non_ortho1_app}
\end{figure}

The construction starts from a one-dimensional diamond chain shown in Fig.~\ref{fig:FB_non_ortho1_app}(a), whose intra-cell and intercell matrices are
\begin{gather}
    \begin{split}
         H_0 = 
        \begin{pmatrix}
            0 & 0 & 1 \\
            0 & 0 & 1 \\
            1 & 1 & 0 \\
        \end{pmatrix}
        \qquad 
        H_1 = 
        \begin{pmatrix}
            0 & 0 & 1 \\
            0 & 0 & 1 \\
            0 & 0 & 0 \\
        \end{pmatrix}
    \end{split}
    \label{eq:diamo1}
\end{gather}
This lattice is chiral~\cite{ramachandran2017chiral}, as it splits in a majority sublattice $\{a_n,b_n\}$ and a minority sublattice $\{c_n\}$ with different cardinality.
This implies that one Bloch band is flat at $E=0$, while the two dispersive bands are $E_{1,2} = \pm 2\sqrt{1 + \cos k}$. 

At first, we break the chirality of the lattice by adding horizontal hopping terms of the same strength of the rhombic hoppings, as shown in Fig.~\ref{fig:FB_non_ortho1_app}(b). 
This turns the three bands of the lattice to $E_1= -2$, $E_2 = 2\cos k$ and $E_{3} =  2( 1+\cos k )$. 

We then thread the central rhombic plaquettes with a magnetic flux, as shown in Fig.~\ref{fig:FB_non_ortho1_app}(c). 
The intra-cell and intercell matrices in Eq.~\eqref{eq:diamo1} turn to 
\begin{gather}
    \begin{split}
        H_0 = 
        \begin{pmatrix}
            0 & 0 & e^{\imi \phi} \\
            0 & 0 & e^{-\imi \phi} \\
            e^{-\imi \phi} & e^{\imi \phi} & 0 
        \end{pmatrix}
        \qquad 
        H_1 = 
        \begin{pmatrix}
            1 & 0 & 1 \\
            0 & 1 & 1 \\
            0 & 0 & 0 \\
        \end{pmatrix}
    \end{split}
    \label{eq:diamo2}
\end{gather}
This however initially destroys the $E=-2$ flat bands, as the three bands are 
$E_1=2 \cos k$ and $E_{2,3} =  \cos k  \pm  \sqrt{4+ 4 \cos k \cos \phi + \cos^2 (k)} $. 

At last, we turn the horizontal chirality-breaking hoppings complex, as shown in Fig.~\ref{fig:FB_non_ortho1_app}(d). 
The intra-cell and intercell matrices in Eq.~\eqref{eq:diamo2} turn to 
\begin{equation}
    \begin{split}
        H_0 =
        \begin{pmatrix}
            0 & 0 & e^{\imi \phi} \\
            0 & 0 & e^{-\imi \phi} \\
            e^{-\imi \phi} & e^{\imi \phi} & 0 
        \end{pmatrix}
        \qquad 
        H_1 = 
        \begin{pmatrix}
            e^{-\imi \phi}  & 0 & 1 \\
            0 & e^{\imi \phi}  & 1 \\
            0 & 0 & 0 \\
        \end{pmatrix}  
    \end{split}
    \label{eq:diamo3}
\end{equation}
The three bands turn to $E_1=- 2 $ and $E_{2,3} =  1+2 \cos\phi \cos k  \pm  \sqrt{1- \cos (2 k) - 2 \cos (2\phi) \sin^2 k} $.

\bibliography{flatband,reference}

\end{document}